\theoremstyle{plain}
\newtheorem{thm}{Theorem}
\newtheorem{assumption}[thm]{Assumption}{\bfseries}{\itshape}
{\bfseries}{\itshape}
\newtheorem{prop}[thm]{Proposition}
\newtheorem{lem}[thm]{Lemma}
\newtheorem{lemma}[thm]{Lemma}
\theoremstyle{definition}
\newtheorem{definition}[thm]{Definition}
\newtheorem{nota}[thm]{Notation}
\theoremstyle{remark}
\newtheorem{rem}[thm]{Remark}
\newtheorem{remark}[thm]{Remark}
\newcommand{\fract}[2]{\hbox{\leavevmode
\kern.1em \raise .3ex \hbox{\the\scriptfont0 $#1$}\kern-.1em }\big/
\raise -.5ex\hbox{\kern-.15em \lower .25ex \hbox{\the\scriptfont0 $#2$}}
}
\newcommand{\eqdef}{\stackrel{\text{def}}{=}}
\renewcommand{\leq}{\leqslant} 
\renewcommand{\geq}{\geqslant}
\newcommand{\Fq}{\ensuremath{\mathbb{F}_q}}
\newcommand{\AC}{\code{A}}
\newcommand{\BC}{\code{B}}
\newcommand{\CC}{\code{C}}
\newcommand{\CCi}{{\code{C}^{(i)}}}
\newcommand{\Gmi}{\Gm^{(i)}}
\newcommand{\mat}[1]{\ensuremath{\boldsymbol{#1}}}
\newcommand{\code}[1]{\ensuremath{\mathscr{#1}}}
\newcommand{\Am}{\mat{A}}
\newcommand{\Gm}{\mat{G}}
\newcommand{\Pm}{\mat{P}}
\newcommand{\Sm}{\mat{S}}
\newcommand{\av}{\mat{a}}
\newcommand{\bv}{\mat{b}}
\newcommand{\cv}{\mat{c}}
\newcommand{\ev}{\mat{e}}
\newcommand{\mv}{\mat{m}}
\newcommand{\vv}{\mat{v}}
\newcommand{\xv}{{\mat{x}}}
\newcommand{\yv}{{\mat{y}}}
\newcommand{\zerom}[1]{{\mat{0}_{#1}}}
\newcommand{\F}{\ensuremath{\mathbb{F}}}
\renewcommand{\Im}{\mat{I}}
\newcommand{\GRS}[3]{\text{\bf GRS}_{#1}(#2,#3)}
\newcommand{\sh}[2]{\mathcal{S}_{#2}\left(#1\right)}
\newcommand{\pu}[2]{\mathcal{P}_{#2}\left(#1\right)}
\newcommand{\re}[2]{\mathcal{R}_{#2}\left(#1\right)}
\newcommand{\sq}[1]{#1^{\star 2}}
\newcommand{\sqc}[1]{#1^{\star 2}} 
\newcommand{\sqb}[1]{\left(#1\right)^{\star 2}}
\newcommand{\kGRS}{k_{\rm GRS}}
\newcommand{\GSCGRS}{\Gm_{\rm SCGRS}}
\newcommand{\GRand}{\Gm_{\rm rand}}
\newcommand{\T}{\mathcal{T}}
\newcommand{\IGRS}{\mathcal{I}_{\rm GRS}}
\newcommand{\IPR}{\mathcal{I}_{\rm PR}}
\newcommand{\IR}{\mathcal{I}_{\rm R}}
\newcommand{\IT}{\mathcal{I}_{\rm T}}
\newcommand{\IInt}[2]{\llbracket #1, #2 \rrbracket}
\newcommand{\Pos}{\IInt{1}{n+w}}
\newcommand{\Jind}{\mathcal{L}}
\begin{document}

\title{Recovering short secret keys of RLCE\\in polynomial time}

\author[1]{Alain Couvreur\thanks{\texttt{alain.couvreur@lix.polytechnique.fr}}}
\affil[1]{Inria \&  LIX, CNRS UMR 7161\break
  \'Ecole polytechnique, 91128 Palaiseau Cedex, France.
}
\author[2,3]{Matthieu Lequesne\thanks{\tt matthieu.lequesne@inria.fr}}
\author[2]{Jean-Pierre Tillich\thanks{\tt jean-pierre.tillich@inria.fr}}
\affil[2]{Inria,   
  2 rue Simone Iff, 75012 Paris, France.
}
\affil[3]{Sorbonne Universit\'e, UPMC Univ Paris 06}

\maketitle

\begin{abstract}
  We present a key recovery attack against Y. Wang's Random Linear
  Code Encryption (RLCE) scheme recently submitted to the
  NIST call for post-quantum cryptography.  This attack recovers the secret key
  for all the short key parameters proposed by the author.
\end{abstract}

\noindent {\bf Key words:} Code-based Cryptography, McEliece encryption scheme,
key recovery attack, generalised Reed Solomon codes, RLCE, Schur product of codes.

\section*{Introduction}
The McEliece encryption scheme dates back to the late 70's \cite{M78}
and lies among the possible post-quantum alternatives to number theory
based schemes using integer factorisation or discrete logarithm.
However, the main drawback of McEliece's original scheme is the large size of
its keys.  Indeed, the classic instantiation of McEliece using binary
Goppa codes requires public keys of several hundreds of kilobytes to
assert a security of 128 bits. For example, the recent NIST
submission {\em Classic McEliece} \cite{BCLMNPPSSSW17}
proposes public keys of $1.1$ to $1.3$ megabytes to assert 256
bits security (with a classical computer).

For this reason, there is a recurrent temptation consisting in using codes 
with a higher decoding capacity for encryption in order to reduce the size 
of the public key. Many proposals
in the last decades involve generalised Reed Solomon (GRS) codes, which
are well--known to have a large minimum distance together with
efficient decoding algorithms correcting up to half the minimum
distance. On the other hand, the raw use of GRS codes has been proved
to be insecure by Sidelnikov and Shestakov \cite{SS92}. Subsequently,
some variations have been proposed as a counter-measure of Sidelnikov
and Shestakov's attack. Berger and Loidreau \cite{BL02} suggested to
replace a GRS code by a random subcode of small codimension,
Wieschebrink \cite{W06} proposed to join
random columns in a generator
matrix of a GRS code and Baldi \textit{et al.} \cite{BBCRS16} suggested to
mask the structure of the code by right multiplying a generator matrix
of a GRS code by the sum of a low rank matrix and a sparse matrix. It
turns out that all of these proposals have been subject to efficient
polynomial time attacks \cite{W10, CGGOT14, COTG15}.

A more recent proposal by Yongge Wang \cite{W16} suggests another
way of hiding the structure of GRS codes.
The outline of Wang's construction is the following: start from a 
$k \times n$ generator matrix of a GRS code of length $n$ and dimension $k$
over a field $\F_q$, add $w$ additional random columns to the matrix,
and mix the columns in a particular manner. The design of this scheme is 
detailed in \S~\ref{subsec:presentation}.
This
approach entails a significant expansion of the public key size but
may resist above-mentioned attacks such as distinguisher and filtration
attacks \cite{CGGOT14, COT17}. This
public key encryption primitive is the core of Wang's recent NIST
submission ``RLCE--KEM'' \cite{W17}.

\paragraph{Our contribution} In the present article we give a
polynomial time key recovery attack against RLCE which breaks the
system when the number of additional random columns $w$ is strictly less
than $n-k$. 
This allows us to break half the parameter sets proposed in \cite{W17}.

\section{Notation and prerequisites}

\subsection{Generalised Reed--Solomon codes}

\begin{nota}
Let $q$ be a power of prime and $k$ a positive integer. We denote by $\F_q[X]_{<k}$ the vector space of polynomials over $\F_q$ whose degree is strictly bounded from above by $k$. 
\end{nota}

\begin{definition}[Generalised Reed Solomon codes]\label{def:GRS}
Let $\xv \in \F_q^n$ be a vector whose entries are pairwise distinct
and $\yv \in \F_q^n$ be a vector whose entries are all nonzero. The
 {\em generalised Reed--Solomon (GRS) code with support $\xv$ and
   multiplier $\yv$ of dimension $k$} is defined as
\[
\GRS{k}{\xv}{\yv} \eqdef \left\{(y_1 f(x_1), \ldots, y_n f(x_n)) ~|~
f \in \F_q[x]_{<k}\right\}.
\]
\end{definition}

\subsection{Schur product of codes and square codes distinguisher}

\begin{nota}
  The component-wise product of two vectors $\av$ and $\bv$ in $\F_q^n$ is denoted by
\[
\av \star \bv \eqdef (a_1b_1, \ldots, a_n b_n).
\]
This definition extends to the product of codes where the {\em Schur product}
of two codes $\code A$ and $\code B \subseteq \F_q^n$ is defined as
\[
\code A \star \code B \eqdef \mathbf{Span}_{\F_q} \left\{
  \av \star \bv ~|~ \av \in \code A, \ \bv \in \code B
  \right\}.
\]
In particular, $\sq{\code A}$ denotes the \emph{square code} of a code $\code A$: $\sq{\code A}\eqdef \code A \star \code A$.
\end{nota}

We recall the following result on the generic behaviour of random codes
with respect to this operation. 

\begin{prop}({\cite[Theorem 2.3]{CCMZ15}}, informal)\label{prop:CCMZ}
For a  linear code $\code R$ chosen at random over $\Fq$ of dimension $k$ and length $n$,
the dimension of $\code R^{\star 2}$ is typically $\min (n, {k+1 \choose 2})$. 
\end{prop}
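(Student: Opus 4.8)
The plan is to translate the statement into a genericity property of the quadratic Veronese map and then to conclude with the Schwartz--Zippel lemma. The upper bound is immediate: if $\word g_1,\dots,\word g_k$ spans $\code R$, then bilinearity and commutativity of $\star$ show that $\code R^{\star 2}$ is spanned by the $\binom{k+1}{2}$ vectors $\word g_i\star \word g_j$ with $1\le i\le j\le k$, so $\dim \code R^{\star 2}\le \min\!\big(n,\binom{k+1}{2}\big)$ in every case.

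For the matching lower bound I would fix a generator matrix $\Gm$ of $\code R$ and pass to its columns $v_1,\dots,v_n\in\F_q^k$. Since the $l$-th coordinate of $\word g_i\star\word g_j$ equals $G_{i,l}G_{j,l}$, the $\binom{k+1}{2}\times n$ matrix $M$ whose rows are the products $\word g_i\star\word g_j$ has for its $l$-th column the quadratic Veronese image $v_l^{[2]}\eqdef (v_{l,i}v_{l,j})_{1\le i\le j\le k}$, whence
\[
  \dim \code R^{\star 2} \;=\; \operatorname{rank} M \;=\; \dim \mathbf{Span}_{\F_q}\big\{\, v_1^{[2]},\dots,v_n^{[2]}\,\big\}.
\]
The elementary but crucial observation is that the Veronese image $\{\, v^{[2]} : v\in\F_q^k\,\}$ spans all of $\F_q^{\binom{k+1}{2}}$ over any field, because the vectors $e_i^{[2]}$ and $(e_i+e_j)^{[2]}-e_i^{[2]}-e_j^{[2]}$ together run over the standard basis. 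Hence, writing $N\eqdef\min\!\big(n,\binom{k+1}{2}\big)$, one can choose $N$ columns (that is, $N$ vectors of $\F_q^k$) whose Veronese images are linearly independent, and this exhibits a specific $N\times N$ minor $P$ of $M$ which, viewed as a polynomial in the $kn$ entries of $\Gm$, is not identically zero.

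It then remains to run a density argument. If $\dim \code R^{\star 2}<N$ then $\operatorname{rank} M<N$, so every $N\times N$ minor of $M$ vanishes at $\Gm$, in particular the minor $P$; but $P$ has degree at most $2N\le 2n$ in the entries of $\Gm$, so by the Schwartz--Zippel lemma this happens for at most a fraction $2n/q$ of all $\Gm\in\F_q^{kn}$. Discarding the negligible event that a random $k\times n$ matrix fails to have rank $k$, a uniformly random code $\code R$ of dimension $k$ thus satisfies $\dim \code R^{\star 2}=\min\!\big(n,\binom{k+1}{2}\big)$ outside an event of probability $O(n/q)$, which is what ``typically'' means here. The step I expect to be the genuine obstacle is this quantitative one: the crude Schwartz--Zippel estimate is informative only when $q$ is large relative to $n$, and obtaining sharp probability bounds as in \cite{CCMZ15} -- especially in the borderline regime where $\binom{k+1}{2}$ is close to $n$ -- seems to demand a more hands-on induction, adding one random generator (resp.\ one random coordinate) at a time and controlling at each step precisely how many of the new products $\word g_{k+1}\star\word g_j$ are linearly independent modulo the square code already built.
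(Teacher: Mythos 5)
First, a point of comparison: the paper does not prove this proposition at all --- it is quoted (explicitly labelled ``informal'') from \cite{CCMZ15}, so there is no internal proof to measure you against; what matters is whether your argument actually delivers the cited theorem in the regime where the paper uses it. Your upper bound is correct, and so is the reduction of the lower bound to a rank computation: identifying $\dim \sq{\code{R}}$ with the rank of the $\binom{k+1}{2}\times n$ matrix whose columns are the Veronese images of the columns of a generator matrix, and observing that the Veronese image spans $\F_q^{\binom{k+1}{2}}$ over any field, is sound, and the Schwartz--Zippel step is applied correctly to a minor of degree at most $2N$ with $N=\min\bigl(n,\binom{k+1}{2}\bigr)$.

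The genuine gap is exactly the one you flag at the end, and it is not a cosmetic one: the failure probability $2N/q$ is only $o(1)$ when $q$ grows faster than $n$, whereas \cite[Theorem 2.3]{CCMZ15} is a statement for arbitrary fixed $q$ (their proof is a finer probabilistic analysis of when random columns, equivalently random evaluation points of quadratic forms, can satisfy a linear dependency, not a generic-minor argument). This matters precisely for the way the proposition is invoked in this paper: the distinguisher compares $\dim\sqb{\sh{\CC}{\Jind}}$ with the generic value for shortened \emph{random} codes of length $n+w-|\Jind|$ of a few hundred over $q=2^{10}$ or $2^{11}$, so $2N/q$ is a constant of order $0.5$--$1$ rather than negligible, and your bound does not establish that a random code ``typically'' has square of dimension $\min\bigl(n,\binom{k+1}{2}\bigr)$ for these parameters. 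So the skeleton (Veronese map, non-vanishing minor, polynomial identity testing) is fine as a large-$q$ proof, but the quantitative content of the cited theorem --- the part the attack actually relies on --- still requires the hands-on analysis you correctly anticipate, e.g.\ adding one random column at a time and bounding the probability that its Veronese image falls in the span already built, with bounds on the number of zeros of nonzero quadratic forms over $\F_q$, as is done in \cite{CCMZ15}.
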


This provides a distinguisher between random codes and algebraically
structured codes such as generalised Reed Solomon codes~\cite{
W10, CGGOT14}, 
Reed Muller codes~\cite{CB14}, polar codes~\cite{BCDOT16}
some Goppa codes \cite{FGOPT13, COT17}
or algebraic geometry codes~\cite{CMP17}.
For instance, in the case of GRS codes, we have the following result.
\begin{prop}
  Let $n, k, \xv, \yv$ be as in Definition~\ref{def:GRS}. Then,
  \[
  \sqb{\GRS{k}{\xv}{\yv}} = \GRS{2k-1}{\xv}{\yv \star \yv}.
  \]
  In particular, if $k < n/2$, then 
 \[
  \dim \sqb{\GRS{k}{\xv}{\yv}} = 2k-1.
 \]
\end{prop}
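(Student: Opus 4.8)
The plan is to translate the Schur product of GRS codewords into multiplication of the underlying polynomials. For a multiplier $\zv$ with nonzero entries and $f \in \F_q[x]$, write $\eval_{\zv}(f) \eqdef (z_1 f(x_1), \dots, z_n f(x_n))$, so that $\GRS{k}{\xv}{\yv} = \{ \eval_{\yv}(f) : f \in \F_q[x]_{<k} \}$; note that $\yv \star \yv$ again has only nonzero entries, so $\GRS{2k-1}{\xv}{\yv \star \yv}$ is a well-defined GRS code, equal to $\{ \eval_{\yv \star \yv}(h) : h \in \F_q[x]_{<2k-1} \}$. The identity I will build on is the pointwise equality
\[
\eval_{\yv}(f) \star \eval_{\yv}(g) = \eval_{\yv \star \yv}(fg),
\]
which holds for all $f,g$ because $(y_i f(x_i))(y_i g(x_i)) = y_i^2 (fg)(x_i)$ coordinatewise.

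From this, the inclusion $\sqb{\GRS{k}{\xv}{\yv}} \subseteq \GRS{2k-1}{\xv}{\yv \star \yv}$ is immediate: by definition the square code is spanned by the products $\eval_{\yv}(f) \star \eval_{\yv}(g)$ with $f,g \in \F_q[x]_{<k}$, each of which equals $\eval_{\yv \star \yv}(fg)$ with $\deg(fg) \le 2k-2 < 2k-1$. For the reverse inclusion I would argue that it suffices, by linearity, to realise each $\eval_{\yv \star \yv}(x^j)$ with $0 \le j \le 2k-2$ as an element of the square code; writing $j = a+b$ with $0 \le a,b \le k-1$ (possible exactly because $j \le 2(k-1)$) gives $x^j = x^a \cdot x^b$, hence $\eval_{\yv \star \yv}(x^j) = \eval_{\yv}(x^a) \star \eval_{\yv}(x^b) \in \sqb{\GRS{k}{\xv}{\yv}}$. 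This yields the claimed equality of codes. For the dimension statement I would use that the $x_i$ are pairwise distinct: when $k < n/2$ one has $2k-1 < n$, so a nonzero polynomial of degree $< 2k-1$ cannot vanish at all $n$ points $x_1, \dots, x_n$; hence the evaluation map $h \mapsto \eval_{\yv \star \yv}(h)$ is injective on $\F_q[x]_{<2k-1}$, and $\dim \GRS{2k-1}{\xv}{\yv \star \yv} = 2k-1$.

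I do not expect a genuine obstacle here; the only point worth isolating is the elementary factorisation fact $\F_q[x]_{<2k-1} = \F_q[x]_{<k} \cdot \F_q[x]_{<k}$ (every monomial of degree at most $2k-2$ splits as a product of two monomials of degree at most $k-1$), which is precisely what forces the target degree $2k-1$ and would fail for any smaller value. Everything else is bookkeeping with the evaluation map.
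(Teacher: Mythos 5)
Your argument is correct and complete: the pointwise identity $\eval_{\yv}(f)\star\eval_{\yv}(g)=\eval_{\yv\star\yv}(fg)$ gives the inclusion, the monomial factorisation $x^j=x^a\cdot x^b$ with $a,b\le k-1$ gives the reverse inclusion, and the root-counting argument (a nonzero polynomial of degree $\le 2k-2<n$ cannot vanish at $n$ distinct points) gives the dimension when $k<n/2$. The paper states this proposition without proof, as a standard fact from the literature on square-code distinguishers; your write-up is exactly the standard argument it leaves implicit, so there is nothing to object to.
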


Thus, compared to random codes whose square have dimension quadratic in
the dimension of the code, the square of a GRS code has a dimension
which is linear in that of the original code.
This criterion allows to distinguish GRS codes of appropriate dimension from random codes.

\subsection{Punctured and shortened codes}\label{subsec:shortening}

The notions of \textit{puncturing} and \textit{shortening} are
classical ways to build new codes from existing ones. These
constructions will be useful for the attack. We recall here their
definition. Here, for a codeword $\cv \in \F_q^n$, we denote
$(c_1,\ldots,c_n)$ its entries.

\begin{definition}[punctured code]
  Let $\CC \subseteq \F_q^n$ and $\Jind \subseteq \IInt{1}{n}$.
  The {\em puncturing of $\CC$ at $\Jind$} is defined as the code
  \[
  \pu{\CC}{\Jind} \eqdef \{(c_i)_{i \in \IInt{1}{n} \setminus \Jind}\ {\rm s.t.}\
  \cv \in \CC\}.
  \]
\end{definition}

A punctured code can be viewed as the restriction of the codewords to
a subset of code positions.  It will be sometimes more convenient to
view a punctured code in this way. For this reason, we introduce the
following definition.

\begin{definition}[restricted code]
  Let $\CC \subseteq \F_q^n$ and $\Jind \subseteq \IInt{1}{n}$. The
  restriction of $\CC$ to $\Jind$ is defined as the code
 \[
  \re{\CC}{\Jind} \eqdef \{(c_i)_{i \in \Jind}\ {\rm s.t.}\
  \cv \in \CC\} = \pu{\CC}{\IInt{1}{n}\setminus \Jind}.
  \]
\end{definition}

\begin{definition}[shortened code]
  Let $\CC \subseteq \F_q^n$ and $\Jind \subseteq \IInt{1}{n}$.
  The {\em shortening of $\CC$ at $\Jind$} is defined as the code
  \[
  \sh{\CC}{\Jind} \eqdef \pu{\{\cv \in \CC\ {\rm s.t.}\
  \forall i \in \Jind,\ c_i = 0\}}{\Jind}.
  \]
\end{definition}

Shortening a code is equivalent to puncturing the dual code, as
explained by the following proposition.

\begin{prop}[{\cite[Theorem 1.5.7]{HP03}}]
  Let $\CC$ be a linear code over $\F_q^{n}$ and
  $\Jind \subseteq \IInt{1}{n}$. Then,
\[
\sh{\CC^{\perp}}{\Jind} = (\pu{\CC}{\Jind})^{\perp} \text{ and }
(\sh{\CC}{\Jind})^{\perp} = \pu{\CC^{\perp}}{\Jind},
\]
where $\AC^{\perp}$ denotes the dual of the code $\AC$.
\end{prop}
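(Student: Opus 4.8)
The plan is to establish the first identity $\sh{\CC^{\perp}}{\Jind} = (\pu{\CC}{\Jind})^{\perp}$ directly from the definition of the canonical inner product on $\F_q^{n}$, and then to deduce the second identity from it by substituting $\CC^{\perp}$ for $\CC$ and using that over a finite field $(\CC^{\perp})^{\perp} = \CC$.

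First I would set $n' \eqdef n - \card{\Jind}$ and introduce the two natural maps relating lengths $n$ and $n'$: the projection $\pi \colon \F_q^{n} \to \F_q^{n'}$ deleting the coordinates indexed by $\Jind$, and the extension-by-zero $\iota \colon \F_q^{n'} \to \F_q^{n}$ placing a vector on the coordinates outside $\Jind$ and zeros on $\Jind$. With these, $\pu{\CC}{\Jind} = \pi(\CC)$ by definition, and $\sh{\CC^{\perp}}{\Jind}$ is the image under $\pi$ of the set of codewords of $\CC^{\perp}$ that vanish on $\Jind$; since $\pi$ restricted to vectors supported outside $\Jind$ is a bijection onto $\F_q^{n'}$ with inverse $\iota$, this rewrites as $\sh{\CC^{\perp}}{\Jind} = \{\av \in \F_q^{n'} \ {\rm s.t.}\ \iota(\av) \in \CC^{\perp}\}$.

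The crux is then the elementary adjunction $\langle \iota(\av), \bv \rangle = \langle \av, \pi(\bv) \rangle$ for all $\av \in \F_q^{n'}$ and $\bv \in \F_q^{n}$, which holds because the coordinates in $\Jind$ contribute $0$ to the left-hand side. Combining this with the previous description, $\iota(\av) \in \CC^{\perp}$ if and only if $\langle \av, \pi(\cv) \rangle = 0$ for every $\cv \in \CC$, i.e.\ if and only if $\av$ is orthogonal to $\pi(\CC) = \pu{\CC}{\Jind}$; this is exactly $\sh{\CC^{\perp}}{\Jind} = (\pu{\CC}{\Jind})^{\perp}$. Applying this to $\CC^{\perp}$ in place of $\CC$ gives $\sh{\CC}{\Jind} = (\pu{\CC^{\perp}}{\Jind})^{\perp}$, and taking the dual of both sides yields $(\sh{\CC}{\Jind})^{\perp} = \pu{\CC^{\perp}}{\Jind}$.

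I do not expect a genuine obstacle: the statement is a standard fact and the argument is short. The only points requiring care are the bookkeeping of coordinate positions when moving between lengths $n$ and $n'$ — one may, if desired, reduce to the case where $\Jind$ is the set of the last $\card{\Jind}$ coordinates — and the invocation of biduality over a finite field, which is what closes the second identity once the first is known.
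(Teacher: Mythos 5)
Your proof is correct and complete. Note that the paper itself does not prove this proposition at all: it is quoted with a citation to \cite[Theorem 1.5.7]{HP03}, so there is no in-paper argument to compare against. Your route --- describing $\sh{\CC^{\perp}}{\Jind}$ as $\{\av \ {\rm s.t.}\ \iota(\av)\in\CC^{\perp}\}$ via the extension-by-zero map, using the adjunction $\langle \iota(\av),\bv\rangle=\langle \av,\pi(\bv)\rangle$ to get $\sh{\CC^{\perp}}{\Jind}=(\pu{\CC}{\Jind})^{\perp}$, and then deducing the second identity by substituting $\CC^{\perp}$ for $\CC$ and invoking $(\CC^{\perp})^{\perp}=\CC$ --- is the standard textbook argument, and the biduality step is legitimate since the canonical bilinear form on $\F_q^{n}$ is non-degenerate (so biduality holds for linear codes by a dimension count). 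The only bookkeeping point, the identification of length-$n'$ vectors with vectors of length $n$ supported outside $\Jind$, is handled explicitly by your maps $\pi$ and $\iota$, so there is no gap.
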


\begin{nota}
  Throughout the document, the indices of the columns (or positions of
  the codewords) will always refer to the indices in the original
  code, although the code has been punctured or shortened. For
  instance, consider a code $\CC$ of length 5 where every word
  $\cv \in \CC$ is indexed $\cv = (c_1, c_2, c_3, c_4, c_5)$. If we
  puncture $\CC$ in $\{1, 3\}$, a codeword $\cv' \in \pu{\CC}{\{1,3\}}$
  will be indexed $(c'_2, c'_4, c'_5)$ and not $(c'_1, c'_2, c'_3)$.
\end{nota}

\section{The RLCE scheme}
\label{sec:scheme}

\subsection{Presentation of the scheme}\label{subsec:presentation}

The RLCE encryption scheme is a code-based cryptosystem, inspired by
the McEliece scheme.  It has been introduced by Y. Wang in \cite{W16}
and a proposal called ``RLCE-KEM'' has recently been submitted as a response
for the NIST's call for post-quantum cryptosystems \cite{W17}.

For a message $\mv \in \F_q^k$, the cipher text is $\cv = \mv \Gm + \ev$ 
where $\ev \in \F_q^{n+w}$ is a random error vector of small weight  $t$
and $\Gm \in \F_q^{k\times(n+w)}$ is a generator matrix defined  as follows, 
for given parameters $n, k$ and $w$.

\begin{enumerate}
\item Let $\xv, \yv \in \F_q^n$ be respectively a support and a multiplier
(as in Definition~\ref{def:GRS}).
\item Let $\Gm_0$ denote a $k \,\times\, n$ generator matrix of the generalised
Reed--Solomon code $\GRS{k}{\xv}{\yv}$ of length $n$ and dimension $k$.
Denote by $g_1, \ldots, g_{n}$ the columns of $\Gm_0$.
\item\label{item:step_random_columns}
  Let 
$r_1, \ldots, r_{w}$ be column vectors chosen uniformly at random in 
$\F_q^k$.
  Denote by $\Gm_1$ the matrix obtained by inserting the random
  columns between GRS columns at the end of $\Gm_0$ as follows:
  \[\Gm_1 \eqdef [g_1, \ldots, g_{n-w}, g_{n-w+1}, r_1,\ldots,
    g_{n},r_{w}] \in \F_q^{k \times (n+w)}.\]
\item Let 
$\Am_1, \ldots, \Am_{w}$ be $2 \times 2$ matrices chosen uniformly at random
in $\mathbf{GL}_2(\F_q)$.
  Let $\Am$ be the block--diagonal non singular matrix
\[
  \Am \eqdef
  \begin{pmatrix}
  \Im_{n-w} & & & (0)\\ & \Am_1 & & \\
  & & \ddots & \\ (0) & & &  \Am_{w}    
  \end{pmatrix}
  \in \F_q^{(n+w)\times(n+w)}.
\]
\item Let $\pi \in \mathfrak{S}_{n+w}$ be a randomly chosen permutation of $\IInt{1}{n+w}$ and $\Pm$ the corresponding $(n+w) \times (n+w)$ permutation matrix. 
\item The public key is the matrix $\Gm \eqdef \Gm_1\Am\Pm$ and
  the private key is $(\xv, \yv, \Am, \Pm)$.
\end{enumerate}

\noindent {\bf Note:} This is a slightly simplified version of the scheme proposed in \cite{W17} without the matrices 
$\Pm_1$ and $\Sm$ of the original description. They are actually not needed and the security of our simplified scheme 
is equivalent to the security of the scheme presented in \cite{W17}.

\subsection{Suggested sets of parameters}

In \cite{W17} the author proposes 2 groups of 3 sets of parameters.
The first group (referred to as \textit{odd ID} parameters) corresponds to parameters such that 
$w \in [0.6(n-k),0.7(n-k)]$, whereas in the second group (\textit{even ID} parameters) the parameters satisfy $w = n-k$.
The parameters of these two groups are listed in Tables~\ref{tab:1stGroup}
and~\ref{tab:2ndGroup}.
  \begin{table}[!h]
  \caption{Set of parameters for the first group : $w \in [0.6(n-k),0.7(n-k)]$.}
  \vspace{.2cm}
  \centering
    \begin{tabular}{|c|c|c|c|c|c|c|c|}
    \hline
    Security level (bits) & Name in \cite{W17} & $n$ & $k$ & $t$ & $w$ & $q$ & Public key size (kB)\\
    \hline
    128 & ID 1& 532 & 376 & 78 & 96 & $2^{10}$ & 118 \\
    \hline
    192 & ID 3 & 846 & 618 & 114 & 144 & $2^{10}$ & 287 \\
    \hline
    256 & ID 5 & 1160 & 700 & 230 & 311 & $2^{11}$ & 742\\
    \hline
  \end{tabular}
  \label{tab:1stGroup}
  \end{table}

\begin{table}[!h]
  \centering
  \caption{Set of parameters for the second group : $w = {n-k}$.}
  \vspace{.2cm}
  \begin{tabular}{|c|c|c|c|c|c|c|c|}
    \hline
    Security level (bits) & Name in \cite{W17} & $n$ & $k$ & $t$ & $w$ & $q$ & Public key size (kB)\\
    \hline
    128 & ID 0 & 630 & 470 & 80 & 160 & $2^{10}$ & 188 \\
    \hline
    192 & ID 2 & 1000 & 764 & 118 & 236 & $2^{10}$ & 450 \\
    \hline
    256 & ID 4 & 1360 & 800 & 280 & 560 & $2^{11}$ & 1232\\
    \hline
  \end{tabular}
  \label{tab:2ndGroup}
\end{table}

The attack of the present paper recovers in polynomial
time any secret key when parameters lie in the first group.

\section{Distinguishing by shortening and squaring}

We will show here that it is possible to distinguish some public keys
from random codes by computing the square of some shortening of the
public code. More precisely, here is our main result.

\begin{thm}\label{thm:main}
  Let $\CC$ be a code over $\Fq$ of length $n+w$ and dimension $k$
  with generator matrix $\Gm$ which is the public key of an RLCE
  scheme that is based on a GRS code of length $n$ and dimension
  $k$. Let $\Jind \subset \Pos$.
  Then,
  \[
  \dim \sqb{\sh{\CC}{\Jind}} \leq \min ( n+w-|\Jind|,\ 
  2(k+w-|\Jind|)-1 ).
  \]
\end{thm}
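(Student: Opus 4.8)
The strategy is to reduce the statement to a dimension bound on the square of a shortened GRS code, with a small perturbation coming from the $w$ random columns and the $2\times 2$ mixing blocks. First I would note that neither the permutation $\Pm$ nor the block-diagonal matrix $\Am$ changes the dimension of a square code: permuting coordinates is an isometry, and each $\Am_i \in \mathbf{GL}_2(\Fq)$ acts on a pair of coordinates by an invertible linear map, which commutes with shortening at coordinates outside that pair and sends $\sq{\CC}$ to a code of the same dimension. So up to relabelling $\Jind$, it suffices to bound $\dim \sqb{\sh{\code{D}}{\Jind}}$ where $\code{D}$ is the code generated by $\Gm_1 = [g_1,\dots,g_{n-w},g_{n-w+1},r_1,\dots,g_n,r_w]$, i.e. the GRS code $\GRS{k}{\xv}{\yv}$ with $w$ extra random columns inserted. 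The subtlety with the $\Am_i$ blocks is that shortening $\CC$ at one coordinate of a mixed pair corresponds, on the $\code{D}$ side, to shortening at a random linear combination of a GRS column and a random column; I would handle this by observing that such a shortening is contained in the shortening of $\code{D}$ at one genuine coordinate of that pair after a change of generator matrix, or more cleanly, that it is contained in $\re{\code{D}}{\T}$ for a suitable coordinate set $\T$ of the right size — this bookkeeping is the part that needs to be done carefully.

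**The core estimate.**

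Write $\T = \Pos \setminus \Jind$, so $|\T| = n+w-|\Jind|$, and split $\T = \T_{\mathrm{GRS}} \sqcup \T_{\mathrm{rand}}$ according to whether a surviving coordinate sits on a GRS column or a random column (again, after undoing $\Am$ and $\Pm$; a surviving coordinate of a mixed pair counts as a GRS coordinate for the purpose of the bound). Let $r = |\T_{\mathrm{rand}}| \le w$. The shortened code $\sh{\code{D}}{\Jind}$, restricted to $\T$, is contained in the sum of two pieces: its restriction to the GRS coordinates, which is a punctured/shortened GRS code and hence sits inside $\RS$-type code $\GRS{k'}{\cdot}{\cdot}$ with $k' \le k$; and its restriction to the $r$ random coordinates, which lives in a space of dimension at most $r$. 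Then I would use the elementary facts that $\sqb{\AC \oplus \BC} \subseteq \sqb{\AC} \oplus \sqb{\BC}$ for codes on disjoint coordinate sets (more precisely, for the "direct sum" decomposition induced by partitioning the support), and that $\dim \sqb{\BC} \le \dim \BC$ trivially bounds the random part, while $\dim \sqb{\GRS{k}{\xv}{\yv}} \le 2k-1$ from the GRS square formula quoted just before the theorem. Combining, the square has dimension at most $(2k-1) + r + (\text{cross terms})$.

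**Controlling the cross terms and finishing.**

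The one genuinely delicate point is the cross term $(\text{GRS part}) \star (\text{random part})$: a priori this could be as large as $k \cdot r$, which would be useless. The resolution is that the Schur product of a codeword supported on the GRS coordinates with a codeword supported on the random coordinates is zero — the two pieces have disjoint supports — so after extending by zero there is no cross contribution at all, and the square of the (extended) direct sum really is the direct sum of the squares. Once that is seen, $\dim \sqb{\sh{\code{D}}{\Jind}}$, as a code on $|\T|$ coordinates, is bounded by $\min(|\T|, (2k-1) + r)$; since each random coordinate "costs" at most one extra dimension beyond a GRS coordinate, and replacing $r$ by its maximum $w$ while accounting for the fact that shortening removes $|\Jind|$ coordinates total, a short count gives $(2k-1)+r \le 2(k+w-|\Jind|)-1$ in the regime where the random part is not truncated, together with the trivial length bound $n+w-|\Jind|$. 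I would double-check the arithmetic of the bound $r \le |\T_{\mathrm{rand}}| = w - |\Jind \cap (\text{random coords})|$ against $2(k+w-|\Jind|)-1$, distinguishing whether $\Jind$ preferentially hits random or GRS coordinates; in all cases the worst case is $\Jind$ contained in the GRS coordinates, which yields exactly the claimed $2(k+w-|\Jind|)-1$. The main obstacle, to reiterate, is purely the reduction step: making precise that shortening/puncturing the public code $\CC$ at an arbitrary coordinate set $\Jind$ corresponds, after peeling off $\Pm$ and $\Am$, to a code sandwiched between a shortened GRS code with extra random columns and its enlargement — the square-dimension estimate itself is then a one-line consequence of the GRS square formula.
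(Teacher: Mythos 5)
Your skeleton --- permutation invariance of the square (the paper's Lemma~\ref{lem:square_permut}), splitting the surviving coordinates into a GRS--like block plus arbitrary columns, and the disjoint--support bound $\dim\sq{\AC}\le 2\kGRS-1+r$ (the paper's Lemma~\ref{lem:square}) --- is the right one, but the reduction on which your whole count rests is not valid, and it conceals the one idea the proof actually needs. The dimension of a Schur square is invariant under monomial maps (permutations and diagonal rescalings), \emph{not} under the $2\times 2$ blocks $\Am_s$: the code spanned by $(1,0,1,0)$ and $(0,1,0,1)$ has square of dimension $2$, while after applying $\begin{pmatrix}1&1\\0&1\end{pmatrix}$ to the first pair of coordinates the square has dimension $3$. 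So you cannot ``undo $\Am$'' and pass to the unmixed code $\code{D}$ generated by $\Gm_1$: neither equality nor the inequality you would need between $\dim\sqb{\sh{\CC}{\Jind}}$ and squares of shortenings of $\code{D}$ holds in general. Likewise, shortening $\CC$ at one coordinate of a mixed pair imposes the constraint $a_s y_j f(x_j)+c_s\psi_s(f)=0$, which is \emph{not} a shortening of $\code{D}$ at either genuine coordinate of that pair, and retreating to a containment in a mere restriction $\re{\code{D}}{\T}$ forfeits exactly the decrease of the GRS dimension that produces the term $-2|\Jind|$ in the bound. Your final accounting also shows something is off: counting an untouched pair as ``one GRS plus one random'' coordinate would give $2(k-|\Jind|)-1+w$ when $\Jind\subseteq\IGRS^1$, strictly below the theorem's bound, whereas for the true code $\CC$ each of the $2w$ mixed coordinates generically contributes a dimension and the theorem's bound is typically attained (Remark~\ref{rem:almost_always}); the transported estimate is therefore false, not just unproven.

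The missing ingredient is the derandomisation phenomenon, which is the heart of the paper's argument. Your parenthetical ``a surviving coordinate of a mixed pair counts as a GRS coordinate'' is precisely the paper's Lemma~\ref{lem:derandomize}: after shortening at a position $i$ of a pair with $c_sd_s\neq 0$, the twin $\tau(i)$ carries $\tilde y_j f(x_j)$ on the shortened code, i.e.\ it genuinely becomes a GRS position; and when both positions of a pair lie in $\Jind$, the two constraints amount to $f(x_j)=\psi_s(f)=0$, costing exactly one unit of GRS dimension. These facts, propagated by induction on $|\Jind|$ through the case analysis of Proposition~\ref{prop:structure} (including the degenerate pairs with $c_sd_s=0$, which give rise to $\IGRS^2$ and $\IR$ and the correction term in \eqref{eq:crucial1}), are what make \emph{every} shortened position lower the bound by $2$, wherever it falls. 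Without them, a position of $\Jind$ landing on a mixed pair only removes one arbitrary column from the count and you get $2(k+w)-1-|\Jind|$, weaker than the claim. Note finally that the delicate case is not ``$\Jind$ contained in the GRS coordinates'' (that case follows from Lemma~\ref{lem:subcode} alone) but $\Jind$ meeting the twin pairs, which is the case that matters for the attack, since the attacker cannot steer $\Jind$ away from the unknown set $\IT$.
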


\subsection{Restriction to the case where $\Pm$ is the identity}

To prove Theorem \ref{thm:main} we can assume that $\Pm$ is the
identity matrix. This is because of the following lemma.

\begin{lemma}\label{lem:square_permut}
  For any permutation $\sigma$ of the code positions $\Pos$ we have
\[ \dim \sqb{\sh{\CC}{\Jind}} = \dim
\sqb{\sh{\CC^\sigma}{\Jind^\sigma}},
\]
where $\CC^\sigma$ is the set of codewords in $\CC$ permuted by
$\sigma$, that is $ \CC^\sigma= \{\cv^\sigma: \cv \in \CC\}$ where
$\cv^\sigma \eqdef (c_{\sigma(i)})_{i \in \Pos}$ and
$\Jind^\sigma \eqdef \{\sigma(i): i \in \Jind\}$.
\end{lemma}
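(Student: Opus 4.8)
The plan is to reduce the statement to the elementary fact that a permutation of coordinates is a linear automorphism of $\Fq^{\Pos}$, hence preserves dimensions, and that it is compatible both with shortening and with the Schur product. Write $P_\sigma$ for the linear automorphism of $\Fq^{\Pos}$ that relabels coordinates according to $\sigma$, so that $\CC^\sigma = P_\sigma(\CC)$ and $\Jind^\sigma$ is the image of $\Jind$ under the corresponding relabeling of the index set. The goal is then to exhibit an explicit coordinate permutation $\sigma'$ of the surviving index set $\Pos \setminus \Jind$ carrying $\sqb{\sh{\CC}{\Jind}}$ onto $\sqb{\sh{\CC^\sigma}{\Jind^\sigma}}$.

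First I would check that shortening commutes with $P_\sigma$. The subspace of codewords of $\CC$ vanishing on $\Jind$ is mapped by $P_\sigma$ bijectively onto the subspace of codewords of $\CC^\sigma$ vanishing on $\Jind^\sigma$ --- this is exactly where one uses that $\Jind^\sigma$ is the image of $\Jind$ --- and then deleting the coordinates indexed by $\Jind$ corresponds, under the bijection $\sigma' \colon \Pos\setminus\Jind \to \Pos\setminus\Jind^\sigma$ induced by the relabeling, to deleting the coordinates indexed by $\Jind^\sigma$. This yields $\sh{\CC^\sigma}{\Jind^\sigma} = \left(\sh{\CC}{\Jind}\right)^{\sigma'}$, where $\AC^{\sigma'}$ denotes the image of a code $\AC$ under the coordinate permutation $\sigma'$.

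Second, I would use the equivariance of the Schur product under coordinate permutations: for any permutation $P$ of coordinates and any two words $\av, \bv$ one has $P(\av) \star P(\bv) = P(\av \star \bv)$, since $\star$ is computed entrywise and $P$ only relabels entries. Passing to spans, $(\AC \star \BC)^{\sigma'} = \AC^{\sigma'} \star \BC^{\sigma'}$, and in particular $\sqb{\DC^{\sigma'}} = \left(\sqb{\DC}\right)^{\sigma'}$ for every code $\DC$ on the index set $\Pos\setminus\Jind$.

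Combining the two steps,
\[
\sqb{\sh{\CC^\sigma}{\Jind^\sigma}} = \sqb{\left(\sh{\CC}{\Jind}\right)^{\sigma'}} = \left(\sqb{\sh{\CC}{\Jind}}\right)^{\sigma'},
\]
and since $\DC \mapsto \DC^{\sigma'}$ is a linear isomorphism the two square codes have the same dimension, which is the assertion. I do not expect a genuine mathematical obstacle in this argument: the only point that needs care is keeping the conventions for $\cv^\sigma$ and for $\Jind^\sigma$ mutually consistent, so that $\Jind^\sigma$ is indeed the image of $\Jind$ under the relabeling determined by $\sigma$ and the induced bijection $\sigma'$ on the surviving positions is well defined.
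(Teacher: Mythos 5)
Your proof is correct; the paper in fact states Lemma~\ref{lem:square_permut} without any proof, and your argument --- a coordinate permutation is a linear isomorphism that commutes both with shortening (once the shortening set is relabelled accordingly) and with the entrywise Schur product, hence induces a dimension-preserving isomorphism between the two squared shortened codes --- is exactly the standard justification the paper leaves implicit. Your closing remark on conventions is well taken: with the paper's literal definitions $\cv^\sigma=(c_{\sigma(i)})_{i\in\Pos}$ and $\Jind^\sigma=\sigma(\Jind)$ the consistent relabelled set would be $\sigma^{-1}(\Jind)$, so reading $\Jind^\sigma$ as the image of $\Jind$ under the relabelling you use (equivalently, applying the statement with $\sigma^{-1}$) is the intended interpretation, and with that reading your two steps close the argument completely.
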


Therefore, for the analysis of the distinguisher,
we can make the following assumption.

\begin{assumption}\label{ass:identity}
  The permutation matrix $\Pm$ is the identity matrix.
\end{assumption}

We will use this assumption several
times the rest of the section, especially to simplify the notation
and define the terminology.
The general case will follow by using
Lemma~\ref{lem:square_permut}.  

\subsection{Analysis of the different kinds of columns}

\subsubsection{Notation and terminology}

Before proving the result, let us introduce some notation and terminology.
The set of positions $\Pos$ splits in a natural way into four sets,
whose definitions are given in the sequel
\begin{equation}
\label{eq:partition}
\Pos = \IGRS^1\cup\IGRS^2\cup\IR\cup\IPR.
\end{equation}

% IGRS1

\begin{definition}
  The set of {\em GRS positions of the first type}, denoted $\IGRS^1$, corresponds
  to GRS columns which have not been associated to a random column.
  This set has cardinality $ n-w$ and is given by
  \begin{equation}\label{eq:I1GRS}
    \IGRS^1 \eqdef \{ i \in \Pos \,\vert\, \pi^{-1}(i) \leq n-w\}.
  \end{equation}  
  Under Assumption~\ref{ass:identity}, this becomes:
  \[
    \IGRS^1 \eqdef \IInt{1}{n-w}.
  \]
\end{definition}

This set is called this way, because at a position $i \in \IGRS^1$, any codeword $\vv \in \CC$
  has an entry of the form
  \begin{equation}\label{eq:GRS_pos}
  v_i = y_i  f(x_i).
  \end{equation}
However, there might be other code positions that are of this form, as we will see later.

% IGRS2

\begin{definition}
  The set of {\em twin positions}, denoted $\IT$, corresponds
  to columns that result in a mix of a random column and
  a GRS one. This set has cardinality $2w$ and is equal to: 
  \[
    \IT \eqdef \{ i \in \Pos \,\vert\, \pi^{-1}(i) > n-w\}.
  \]
  Under Assumption \ref{ass:identity}, this becomes:
  \[
    \IT \eqdef \IInt{n-w+1}{n+w}.
  \]
\end{definition}

The set $\IT$ can be divided in several subsets as follows.

% IT

\begin{definition}
  Each  position in $\IT$ has a unique corresponding \textit{twin} position
  which is the position of the column with which it was mixed.
  For all $s \in \IInt{1}{w}$, $\pi(n-w+2s-1)$ and $\pi(n-w+2s)$
  are twin positions. Under Assumption \ref{ass:identity},
  the positions $n-w+2s-1$ and $n-w+2s$ are twins for all $s$ in $\IInt{1}{w}$.
\end{definition}

For convenience, we introduce the following notation.
 
\begin{nota}
  The twin of a position $i \in \IT$ is denoted by $\tau(i)$.
\end{nota}

To any twin pair $\{i,\tau(i)\}=\{\pi(n-w+2s-1),\pi(n-w+2s)\}$ with $s \in \{1,\dots,w\}$ 
is associated  a unique linear form
  $\psi_s : \F_q[x]_{<k} \rightarrow \Fq$ and a non-singular matrix 
  $\Am_s$ such that for any codeword $\vv \in \CC$, we have
    \begin{equation}\label{eq:PR}
      \begin{array}{rcl}
      v_{i} &=& a_s  y_{j}  f(x_{j}) + c_s  \psi_s(f)\\
      v_{\tau(i)} &=& b_s  y_{j}  f(x_{j}) + d_s  \psi_s(f),
      \end{array}
    \end{equation}
  where $j=n-w+s$ and
  \begin{equation}
\label{eq:Ams}
\begin{pmatrix}
    a_s & b_s \\ c_s & d_s
  \end{pmatrix}= \Am_s.
\end{equation}

The linear form $\psi_s$ is the form whose evaluations provides the
random column added on the right of the $(n-w+s)$--th column during
the construction process of $\Gm$ (see \S~\ref{subsec:presentation},
Step~\ref{item:step_random_columns}).  From \eqref{eq:PR}, we see that
we may obtain more GRS positions: indeed $v_i = a_s y_{j} f(x_{j})$ if
$c_s=0$ or $v_{\tau(i)} = b_s y_{j}f(x_j)$ if $d_s=0$.  On the other
hand if $c_s d_s \neq 0$ the twin pairs are {\em correlated} in the
sense that they behave in a non-trivial way after shortening: Lemma
\ref{lem:derandomize} shows that if one shortens the code in such a
position its twin becomes a GRS position.  We therefore call such a
twin pair a {\em pseudo-random} twin pair and the set of pseudo-random
twin pairs forms what we call the set of {\em pseudo-random} positions.  

% IPR

\begin{definition}
  The set of {\em pseudo-random positions} (PR in short), denoted
  $\IPR$, is given by
\begin{equation}\label{eq:IPR}
  \IPR \eqdef \bigcup_{s \in \IInt{1}{w} \text{ \textit{s.t.} } c_s d_s \neq 0}
  \{\pi(n-w+2s-1),\pi(n-w+2s)\}.
\end{equation}
Under Assumption \ref{ass:identity}, this becomes:
\begin{equation}\label{eq:IPR_under_assumption}
  \IPR = \bigcup_{s \in \IInt{1}{w} \text{ \textit{s.t.} } c_s d_s \neq 0}
  \{n-w+2s-1,n-w+2s\}.
\end{equation}
\end{definition}

If $c_s d_s=0$, then a twin pair splits into a GRS position of the
second kind and a random position.  The GRS position of the second
kind is $\pi(n-w+2s-1)$ if $c_s=0$ or $\pi(n-w+2s)$ if $d_s=0$ ($c_s$
and $d_s$ can not both be equal to $0$ since $\Am_s$ is invertible).

% IGRS2

\begin{definition}
  The set {\em GRS positions of the second kind}, denoted $\IGRS^2$,
  is defined as
\begin{equation}\label{eq:I2GRS}
  \IGRS^2 \eqdef \{\pi(n-w+2s-1) \,|\, c_s = 0\} \cup \{\pi(n-w+2s)
  \,|\, d_s = 0\}.
\end{equation}
Under Assumption \ref{ass:identity}, this becomes:
\begin{equation}\label{eq:I2GRS_under_assumption}
\IGRS^2 = \{n-w+2s-1\,|\,c_s = 0\} \cup \{n-w+2s\,|\,d_s = 0\}.
\end{equation}
\end{definition}

% IR

\begin{definition}
The set of {\em random positions}, denoted $\IR$, is defined as
\begin{equation}\label{eq:IR}
\IR \eqdef \{\pi(n-w+2s-1)\,|\,d_s = 0\} \cup \{\pi(n-w+2s)\,|\,c_s = 0\}.
\end{equation}
Under Assumption \ref{ass:identity}, this becomes:
\begin{equation}\label{eq:IR_under_assumption}
\IR = \{n-w+2s-1\,|\,d_s = 0\} \cup \{n-w+2s\,|\,c_s = 0\}.
\end{equation}
\end{definition} 

We also define the {\em GRS positions} to be the GRS positions of the first or the second kind.

% IGRS

\begin{definition}
The set of {\em GRS positions}, denoted $\IGRS$, is defined as
\begin{equation}\label{eq:IGRS}
\IGRS \eqdef \IGRS^1 \cup \IGRS^2.
\end{equation}
\end{definition}
  
\begin{remark}\label{rem:typical_case}
  Note that in the typical case $\IGRS^2$ and $\IR$ are empty
  sets. Indeed, such positions exist only if one of the entries
  (either $c_s$ or $d_s$) of a random non-singular matrix $\mat{A}_s$
  is equal to zero.
\end{remark}

We finish this subsection with a lemma.

\begin{lemma}
  \label{lem:cardIPR}
  $|\IGRS^2| = |\IR|$ and $|\IPR| = 2(w - |\IR|)$.
\end{lemma}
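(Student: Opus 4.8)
The idea is to count positions by running through the index $s \in \IInt{1}{w}$ that labels the $w$ twin pairs and to see how each pair contributes to the four sets $\IGRS^1$, $\IGRS^2$, $\IPR$, $\IR$. The starting point is the trichotomy imposed by the $2\times 2$ invertible matrix $\Am_s = \begin{pmatrix} a_s & b_s \\ c_s & d_s \end{pmatrix}$: since $\Am_s$ is non-singular, $c_s$ and $d_s$ cannot both vanish. Hence for each $s$ exactly one of the following holds: (i) $c_s d_s \neq 0$; (ii) $c_s = 0$ and $d_s \neq 0$; (iii) $c_s \neq 0$ and $d_s = 0$.

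First I would treat case (i): by definition \eqref{eq:IPR} the pair $\{n-w+2s-1, n-w+2s\}$ contributes two elements to $\IPR$ and nothing to $\IGRS^2$ or $\IR$. Next, in case (ii) the definitions \eqref{eq:I2GRS} and \eqref{eq:IR} put $n-w+2s-1$ into $\IGRS^2$ and $n-w+2s$ into $\IR$, so this pair contributes one element to each of $\IGRS^2$ and $\IR$ and nothing to $\IPR$. Symmetrically, case (iii) puts $n-w+2s$ into $\IGRS^2$ and $n-w+2s-1$ into $\IR$, again one element to each. Summing over $s$: if we let $p$ be the number of indices $s$ falling in case (i), then $w-p$ indices fall in cases (ii) or (iii) together, whence $|\IPR| = 2p$, $|\IGRS^2| = w - p$ and $|\IR| = w-p$. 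This immediately gives $|\IGRS^2| = |\IR|$ and, eliminating $p = w - |\IR|$, also $|\IPR| = 2(w-|\IR|)$.

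The only thing to be careful about is that the various subsets of $\IT$ are genuinely disjoint and jointly cover $\IT$, so that the per-pair counting does not double-count; this is clear from the case analysis above since within each twin pair the two positions land in distinct sets and every position of $\IT = \IInt{n-w+1}{n+w}$ lies in exactly one pair. Under Assumption~\ref{ass:identity} the formulas \eqref{eq:IPR_under_assumption}, \eqref{eq:I2GRS_under_assumption} and \eqref{eq:IR_under_assumption} are the ones used above; the general statement follows by applying $\pi$, which is a bijection and therefore preserves cardinalities. There is no real obstacle here — the statement is essentially a bookkeeping identity — the only mild subtlety is keeping the case split driven by \emph{which} of $c_s, d_s$ vanishes consistent with the asymmetric definitions of $\IGRS^2$ and $\IR$.
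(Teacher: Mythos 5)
Your proof is correct and follows essentially the same route as the paper: both arguments rest on the case analysis of $(c_s,d_s)$ for each twin pair (using invertibility of $\Am_s$ to exclude $c_s=d_s=0$), which yields the disjoint decomposition of $\IInt{n-w+1}{n+w}$ into $\IPR\cup\IGRS^2\cup\IR$ and the pairing between $\IGRS^2$ and $\IR$; your per-pair count with the parameter $p$ is just a slight repackaging of the paper's bookkeeping.
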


\begin{proof}
  Using (\ref{eq:IPR_under_assumption}),
  (\ref{eq:I2GRS_under_assumption}) and (\ref{eq:IR_under_assumption})
  we see that, under Assumption~\ref{ass:identity},
  \begin{equation}\label{eq:decomp_interval}
  \IInt{n-w+1}{n+w} = \IPR \cup \IGRS^2 \cup \IR 
  \end{equation}
  and the above union is disjoint.
  Next, there is a one-to-one correspondence relating $\IGRS^2$ and $\IR$.
  Indeed, still under Assumption~\ref{ass:identity}, if
  $c_s = 0$ for some $s \in \IInt{1}{w}$, then $n-w+2s-1 \in \IGRS^2$ and
  $n-w+2s \in \IR$ and conversely if $d_s = 0$. This proves that
  $|\IGRS^2| = |\IR|$, which, together with~(\ref{eq:decomp_interval})
  yields the result.
\end{proof}

\subsection{Intermediate results}
  
Before proceeding to the proof of Theorem~\ref{thm:main}, let us state
and prove some intermediate results.  We will start by
Lemmas~\ref{lem:derandomize} and~\ref{lem:subcode}, that will be
useful to prove Proposition~\ref{prop:structure} on the structure of
shortened RLCE codes, by induction on the number of shortened
positions.  This proposition will be the key of the final theorem.
Then, we will prove a general result on modified GRS codes with
additional random columns.

\subsubsection{Two useful lemmas}

The first lemma explains that, after shortening a 
PR position, its twin will behave like a GRS position.
This is actually a crucial lemma that explains
why PR columns in $\Gm$ do not really behave like random
columns after shortening the code at the corresponding position. 

\begin{lem}\label{lem:derandomize}
Let $i$ be a PR position and $\Jind$ a set of positions that neither contains $i$ nor $\tau(i)$.
Let $\CC' \eqdef \sh{\CC}{\Jind}$.
 The position $\tau(i)$ {\em behaves like a GRS position} 
in the code $\sh{\CC'}{\{i\}}$.
  That is, the $\tau(i)$--th column of a generator matrix of $\sh{\CC'}{\{i\}}$
  has entries of the form
  \[
  \tilde{y}_{j}  f(x_{j})
  \]
  for some $j$ in $\IInt{n-w+1}{n}$ and $\tilde{y}_j$ in $\Fq$.
\end{lem}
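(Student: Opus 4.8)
\textbf{Proof plan for Lemma~\ref{lem:derandomize}.}

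The plan is to argue directly from the explicit description~\eqref{eq:PR} of how a PR twin pair $\{i,\tau(i)\}$ sits inside $\CC$. Say this pair is the one coming from index $s$, so that for every codeword $\vv \in \CC$ we have $v_i = a_s\, y_j f(x_j) + c_s\, \psi_s(f)$ and $v_{\tau(i)} = b_s\, y_j f(x_j) + d_s\, \psi_s(f)$ with $j = n-w+s$ and $c_s d_s \neq 0$ (since $i$ is a PR position). First I would observe that shortening commutes with the restriction to a set of positions disjoint from the shortened set, so that working inside $\CC' = \sh{\CC}{\Jind}$ does not disturb the coordinates $i$ and $\tau(i)$: the identities~\eqref{eq:PR} still hold verbatim for every $\vv \in \CC'$, merely with $f$ ranging over the subspace of $\F_q[x]_{<k}$ cut out by the shortening conditions at $\Jind$. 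Call this subspace $V \subseteq \F_q[x]_{<k}$.

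Next I would describe $\sh{\CC'}{\{i\}}$: it consists of the codewords of $\CC'$ whose $i$-th coordinate vanishes, with that coordinate then deleted. So the relevant subspace of polynomials is $V_0 \eqdef \{f \in V \mid a_s\, y_j f(x_j) + c_s\, \psi_s(f) = 0\}$, i.e. $\psi_s(f) = -\tfrac{a_s}{c_s}\, y_j f(x_j)$ for all $f \in V_0$ (using $c_s \neq 0$). Substituting this into the formula for $v_{\tau(i)}$ gives, for every $f \in V_0$,
\[
v_{\tau(i)} = b_s\, y_j f(x_j) + d_s\left(-\frac{a_s}{c_s} y_j f(x_j)\right) = \left(b_s - \frac{a_s d_s}{c_s}\right) y_j f(x_j) = \frac{b_s c_s - a_s d_s}{c_s}\, y_j f(x_j).
\]
Since $\Am_s$ is invertible, $a_s d_s - b_s c_s \neq 0$, so the scalar $\lambda \eqdef (b_s c_s - a_s d_s)/c_s$ is nonzero. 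Hence the $\tau(i)$-th coordinate of every codeword of $\sh{\CC'}{\{i\}}$ has the form $\tilde{y}_j f(x_j)$ with $\tilde{y}_j \eqdef \lambda y_j \in \F_q$, which is exactly the claimed GRS form; the index $j = n-w+s$ lies in $\IInt{n-w+1}{n}$ as required. Taking a generator matrix of $\sh{\CC'}{\{i\}}$, whose rows are such codewords, shows its $\tau(i)$-th column has entries $\tilde{y}_j f(x_j)$, completing the argument.

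I do not expect a genuine obstacle here — the lemma is really just the linear-algebra manifestation of the invertibility of $\Am_s$. The one point that deserves care is the bookkeeping around indices: making sure that shortening at $\Jind$ (with $\Jind$ disjoint from $\{i,\tau(i)\}$) really leaves the pair of linear relations~\eqref{eq:PR} intact, and that after further shortening at $\{i\}$ the coordinate $\tau(i)$ still survives (which it does, since $\tau(i) \notin \Jind \cup \{i\}$). Under Assumption~\ref{ass:identity} the positions can be named concretely ($i, \tau(i) \in \{n-w+2s-1, n-w+2s\}$ and $j = n-w+s$), which makes the substitution entirely mechanical; the general case follows by Lemma~\ref{lem:square_permut}-style relabelling, though strictly speaking the statement as phrased already tracks positions through $\pi$ and $\tau$, so no extra work is needed.
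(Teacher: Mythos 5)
Your proposal is correct and follows essentially the same route as the paper's proof: express the pair $\{i,\tau(i)\}$ via \eqref{eq:PR}, use $c_s\neq 0$ to solve $\psi_s(f)=-c_s^{-1}a_s y_j f(x_j)$ from the shortening condition at $i$, and substitute into the expression for $v_{\tau(i)}$. Your extra remarks (that shortening at $\Jind$ leaves \eqref{eq:PR} intact and that the resulting scalar is nonzero by invertibility of $\Am_s$) are harmless refinements of the same argument.
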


\begin{proof}
Let us assume that 
$i = n-w+2s-1$ for some $s \in \{1,\dots,w\}$.
The case $i=n-w+2s$ can be proved in a similar way.
    At position $i$, for any $\cv \in \CC'$, from \eqref{eq:PR}, we
    have 
    \[
    c_i = a  y_{j}  f(x_{j}) + c  \psi_s(f),
    \]
    where $j=n-w+s$. 
    By shortening, we restrict our space of polynomials to the subspace
    of polynomials in $\F_q[x]_{<k}$ satisfying $c_i = 0$, \textit{i.e.}
    Since $i$ is a PR position, $c \neq 0$ and therefore
    \[
    \psi_s(f) = - c^{-1}a  y_{j}  f(x_{j}).
    \]
    Therefore, at the twin position $\tau(i)=n-w+2s$ and for any
    $\cv \in \sh{\CC'}{\{i\}}$, we have
    \begin{eqnarray*}
    c_{\tau(i)} &=& b y_j  f(x_j) + d  \psi_j (f)\\
             &=& y_j(b - da c^{-1}) f(x_j).
    \end{eqnarray*}
\end{proof}

\begin{remark}
  This lemma does not hold for a random position, since the proof
  requires that $c \neq 0$. It is precisely because of this that we
  have to make a distinction between twin pairs, \textit{i.e.} pairs
  for which the associated matrix $\Am_s$ is such that
  $c_s d_s \neq 0$ and pairs for which it is not the case.
\end{remark}

This lemma allows us to get some insight on the structure of the
shortened code $\sh{\CC}{\Jind}$. Before giving the relevant statement
let us first recall the following result.

\begin{lemma}
\label{lem:subcode}
Consider a linear code $\code{A}$ over $\Fq$ whose restriction
to a subset $\Jind$ is a subcode of a GRS code over $\Fq$ of dimension
$\kGRS$. Let $i$ be an element of $\Jind$. Then the restriction of
$\sh{\AC}{\{i\}}$ to $\Jind\setminus\{i\}$ is a subcode of a GRS code
of dimension $\kGRS-1$.
\end{lemma}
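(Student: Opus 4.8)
The plan is to unwind the definitions of shortening and restriction and then to present the restricted code as a quotient (or subspace) of the polynomial space $\F_q[x]_{<\kGRS}$ carved out by a single linear condition, which is exactly what lowers the dimension of the ambient GRS code by one. Concretely, write $\re{\AC}{\Jind} \subseteq \GRS{\kGRS}{\xv}{\yv}$ for the relevant support $\xv$ and multiplier $\yv$ indexed by $\Jind$. Every codeword $\av \in \AC$ with $\av$ restricted to $\Jind$ lying in this GRS code is described, on the positions of $\Jind$, by a polynomial $f_\av \in \F_q[x]_{<\kGRS}$ via $a_\ell = y_\ell f_\av(x_\ell)$ for $\ell \in \Jind$; note this polynomial is uniquely determined by $\av|_\Jind$ as soon as $|\Jind| \geq \kGRS$ (and if $|\Jind| < \kGRS$ the statement is vacuous since every code of length $|\Jind|-1$ is trivially a subcode of a full-length GRS code of any dimension $\geq |\Jind|-1$, so I would dispose of that degenerate case first).

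The key step is then: shortening $\AC$ at the single position $i \in \Jind$ selects the subcode $\{\av \in \AC : a_i = 0\}$ and deletes coordinate $i$. For $\av$ in this subcode whose $\Jind$-restriction sits in the GRS code, the defining polynomial satisfies $y_i f_\av(x_i) = 0$, hence $f_\av(x_i) = 0$ since $y_i \neq 0$. Thus $f_\av$ lies in the subspace of $\F_q[x]_{<\kGRS}$ of polynomials vanishing at $x_i$, which equals $(x - x_i)\cdot \F_q[x]_{<\kGRS-1}$, a space of dimension $\kGRS - 1$. Writing $f_\av = (x - x_i) g_\av$ with $\deg g_\av < \kGRS - 1$, the restriction of $\sh{\AC}{\{i\}}$ to $\Jind \setminus \{i\}$ consists of vectors $\bigl(y_\ell (x_\ell - x_i) g_\av(x_\ell)\bigr)_{\ell \in \Jind \setminus\{i\}}$. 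Setting $\tilde y_\ell \eqdef y_\ell(x_\ell - x_i)$, which is nonzero because the entries of $\xv$ are pairwise distinct, this exhibits $\re{(\sh{\AC}{\{i\}})}{\Jind\setminus\{i\}}$ as a subcode of $\GRS{\kGRS-1}{\xv_{\Jind\setminus\{i\}}}{\tilde\yv_{\Jind\setminus\{i\}}}$, which is what we wanted.

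The only mild subtlety — the "obstacle", such as it is — is bookkeeping: one must be careful that the restriction operation and the shortening operation commute in the way needed, i.e. that $\re{(\sh{\AC}{\{i\}})}{\Jind\setminus\{i\}}$ really is obtained by first intersecting $\re{\AC}{\Jind}$ with $\{a_i = 0\}$ and then deleting coordinate $i$; this is immediate from the definitions but worth stating since coordinates outside $\Jind$ are simply ignored throughout and play no role. There is also the point that the map $\av \mapsto f_\av$ is only well-defined on $\re{\AC}{\Jind}$ (not on $\AC$ itself) and only injective when $|\Jind| \geq \kGRS$; handling the short case $|\Jind| < \kGRS$ separately, as noted above, avoids any trouble. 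Everything else is the elementary fact that polynomials of degree $< \kGRS$ vanishing at a prescribed point form a $(\kGRS-1)$-dimensional space, together with the fact that multiplying the multiplier vector coordinatewise by nonzero scalars keeps a GRS code a GRS code.
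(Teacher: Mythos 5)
Your proof is correct and follows essentially the same route as the paper's: represent the restricted code by evaluations of polynomials of degree $<\kGRS$, observe that shortening at $i$ forces $f(x_i)=0$, factor $f=(x-x_i)g$, and absorb the factor $(x_\ell-x_i)\neq 0$ into the multiplier. The only difference is your discussion of uniqueness of $f_{\av}$ and the case $|\Jind|<\kGRS$, which is harmless but unnecessary, since the argument only requires that each restricted codeword be representable by \emph{some} polynomial, not a unique one.
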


\begin{proof}
By definition the restriction $\AC'$ to $\Jind$ is a code of the form
\[ \AC' \eqdef \left\{\left(y_j f(x_j)\right)_{j \in \Jind}: f \in
L\right\},
\]
where the $y_j$'s are nonzero elements of $\Fq$, the $x_j$'s are
distinct elements of $\Fq$ and $L$ is a subspace of $\Fq[X]_{<\kGRS}$.
Clearly the restriction $\AC''$ of $\sh{\AC}{\{i\}}$ to
$\Jind\setminus\{i\}$ can be written as
\[ \AC'' = \left\{\left(y_j f(x_j)\right)_{j \in \Jind
\setminus\{i\}}: f \in L,f(x_i)=0\right\}.
\] The polynomials $f(X)$ in $L$ such that $f(x_i)=0$ can be written
as $f(X)=(X-x_i)g(X)$ where $\deg g = \deg f-1$ and $g$ ranges in this
case over a subspace $L'$ of polynomials of degree $< \kGRS-1$.  We
can therefore write
\[ \AC'' = \left\{\left(y_j(x_j-x_i)g(x_j)\right)_{j \in \Jind
\setminus\{i\}}: g \in L'\right\}.
\] This implies our lemma.
\end{proof}

\subsubsection{The key proposition}

Using Lemmas~\ref{lem:derandomize} and~\ref{lem:subcode},
we can prove the following result by
induction. This result is the key proposition for proving Theorem
\ref{thm:main}.

\begin{prop}\label{prop:structure}
Let $\Jind$ be a subset of $\Pos$ and let $\Jind_0, \Jind_1, \Jind_2$ be subsets of $\Jind$ defined as
\begin{itemize}
\item $\Jind_0$ the set of GRS positions (see~(\ref{eq:I1GRS}), (\ref{eq:I2GRS}) and~(\ref{eq:IGRS}) for a definition) of
  $\Jind$, \textit{i.e.}
  \[\Jind_0\eqdef \Jind \cap \IGRS;\]
\item $\Jind_1$ the set of PR positions (see~(\ref{eq:IPR})) of
  $\Jind$ that do not have their twin in $\Jind$, \textit{i.e.}
  \[\Jind_1 \eqdef \{ i \in \Jind \cap \IPR \,\vert\, \tau(i) \not\in
  \Jind\};\]
\item $\Jind_2$ the set of PR positions of $\Jind$ whose twin position
  is also included in $\Jind$, \textit{i.e.}
  \[\Jind_2 \eqdef \{ i \in \Jind \cap \IPR \,\vert\, \tau(i) \in \Jind\}.\]
\end{itemize}

Let $\CC'$ be the restriction of $\sh{\CC}{\Jind}$ to
$(\IGRS\setminus\Jind_0)\cup\tau(\Jind_1)$.  Then, $\CC'$ is a subcode
of a GRS code of length $|\IGRS|-|\Jind_0|+|\Jind_1|$ and dimension
$k-|\Jind_0|-\frac{|\Jind_2|}{2}\cdot$
\end{prop}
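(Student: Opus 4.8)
The statement is about the structure of a code obtained by shortening an RLCE code and then restricting to a carefully chosen set of positions. The natural approach is induction on $|\Jind|$, shortening one position at a time, using Lemmas~\ref{lem:derandomize} and~\ref{lem:subcode} as the two tools that handle the two qualitatively different cases. Since shortening and restriction commute appropriately, I would set up the induction so that at each step I peel off a single position $i$ from $\Jind$, track what happens to the GRS length, the GRS dimension, and the auxiliary set of ``derandomized twins'' that must be adjoined to the restriction set, and then invoke the induction hypothesis on the smaller code $\sh{\CC}{\{i\}}$ with shortening set $\Jind\setminus\{i\}$.

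\textbf{Base case and the three types of positions.} When $\Jind=\emptyset$, the claim reduces to the assertion that the restriction of $\CC$ to $\IGRS$ is a subcode of a GRS code of length $|\IGRS|$ and dimension $k$; this is immediate from~\eqref{eq:GRS_pos} (for $\IGRS^1$) and from the $c_s=0$ or $d_s=0$ cases of~\eqref{eq:PR} (for $\IGRS^2$), since in all these positions a codeword entry is of the form $y_i f(x_i)$ with $f\in\F_q[X]_{<k}$. For the inductive step I distinguish which of the three subsets the removed position $i$ belongs to. If $i\in\Jind_0$ (a GRS position of $\Jind$), then shortening at $i$ is exactly the situation of Lemma~\ref{lem:subcode}: the restriction to the remaining GRS positions stays a subcode of a GRS code, the length drops by $1$ (we lose position $i$) and the dimension drops by $1$ — matching the $-|\Jind_0|$ term. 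If $i\in\Jind_1$ (a PR position whose twin is \emph{not} in $\Jind$), Lemma~\ref{lem:derandomize} tells us that after shortening at $i$ the twin $\tau(i)$ behaves like a GRS position; so we remove $i$ from the picture but \emph{gain} $\tau(i)$ as a new GRS-like position, which is why the length of the ambient GRS code increases by $1$ per element of $\Jind_1$ (the $+|\Jind_1|$ term) while the dimension is unaffected by these positions. If $i\in\Jind_2$, i.e. $i$ is a PR position whose twin $\tau(i)$ \emph{is} also in $\Jind$: here I would shorten first at $i$, which by Lemma~\ref{lem:derandomize} turns $\tau(i)$ into a GRS position, and then shorten at $\tau(i)$, which by Lemma~\ref{lem:subcode} costs one unit of GRS dimension; so each twin pair in $\Jind_2$ (two positions) costs $1$ in dimension, giving the $-\tfrac{|\Jind_2|}{2}$ term, and contributes nothing net to the length of the ambient GRS code since both $i$ and $\tau(i)$ disappear and the position $\tau(i)$ is ``used up'' by the second shortening.

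\textbf{Bookkeeping and the main obstacle.} The bulk of the work is combinatorial bookkeeping: one must check that after removing a single position $i$, the sets $\Jind_0,\Jind_1,\Jind_2$ attached to $\Jind\setminus\{i\}$ relative to the new code $\sh{\CC}{\{i\}}$ are exactly what is needed for the induction hypothesis to close the count — in particular that a position of $\Jind_1$ becomes, for the smaller problem, a \emph{GRS} position (so it joins the ``$\IGRS$'' of the subproblem rather than staying PR), and that a pair in $\Jind_2$ is correctly handled as ``one PR-derandomization followed by one GRS-shortening.'' The delicate point I expect to be the real obstacle is verifying that the set to which we restrict, namely $(\IGRS\setminus\Jind_0)\cup\tau(\Jind_1)$, is stable under this recursive description: when we pass to $\sh{\CC}{\{i\}}$ we must make sure the restriction set named by the induction hypothesis coincides with ours, accounting for the fact that $\tau(i)$ moves from ``outside the restriction set'' to ``inside it'' precisely when $i\in\Jind_1$, and that we never accidentally double-count a twin or restrict to a position already shortened away. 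Once the invariants are pinned down, each inductive step is a one-line application of the appropriate lemma together with the arithmetic ``length $-1$, dimension $-1$'' (GRS position), ``length $+0$, dimension $+0$, restriction set gains $\tau(i)$'' (PR with absent twin), or ``two shortenings: dimension $-1$'' (PR pair), and summing over the elements of $\Jind$ gives the stated length $|\IGRS|-|\Jind_0|+|\Jind_1|$ and dimension $k-|\Jind_0|-\tfrac{|\Jind_2|}{2}$.
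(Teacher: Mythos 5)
Your overall strategy is the same as the paper's: induction on $|\Jind|$, with Lemma~\ref{lem:subcode} handling the GRS positions, Lemma~\ref{lem:derandomize} handling the PR positions whose twin is absent, and each pair of $\Jind_2$ costing exactly one unit of GRS dimension via ``derandomise, then shorten'' (the paper instead steps back to $\Jind\setminus\{i,\tau(i)\}$ and shortens the pair at once, using that invertibility of $\Am_s$ forces $f(x_j)=\psi_s(f)=0$; both orderings give the same count). However, your induction step has a genuine gap: you only treat $i\in\Jind_0\cup\Jind_1\cup\Jind_2$, and these three sets do \emph{not} exhaust $\Jind$ in general. A position of $\Jind$ may lie in $\IR$ (the random positions arising when $c_sd_s=0$, see~(\ref{eq:IR})), and such a position belongs to none of $\Jind_0,\Jind_1,\Jind_2$. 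The proposition is stated for an arbitrary $\Jind\subseteq\Pos$, and the paper devotes a fourth case to $i\in\IR$: there $\Jind_0,\Jind_1,\Jind_2$ are unchanged and the restriction of $\sh{\CC}{\Jind}$ is a subcode of the restriction of $\sh{\CC}{\Jind\setminus\{i\}}$, so the bound carries over. The case is easy, but omitting it means your argument silently assumes $\IR\cap\Jind=\emptyset$ (the ``typical'' situation of Remark~\ref{rem:typical_case}), not the general statement.

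A second, more structural point: you propose to invoke the induction hypothesis ``on the smaller code $\sh{\CC}{\{i\}}$ with shortening set $\Jind\setminus\{i\}$''. As stated, the proposition concerns the RLCE code $\CC$ itself, and $\sh{\CC}{\{i\}}$ is not an RLCE code (its GRS part is only a subcode of a GRS code, and a twin pair may have collapsed), so the hypothesis does not literally apply to it; to make your recursion type-check you would have to generalise the induction statement to this wider class of codes. The paper sidesteps this by keeping $\CC$ fixed, inducting on $|\Jind|$, writing $\Jind=\Jind'\cup\{i\}$, and applying Lemma~\ref{lem:subcode} or Lemma~\ref{lem:derandomize} to $\sh{\CC}{\Jind'}$ --- both lemmas are stated in exactly the generality needed for that. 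With the missing $\IR$ case added and the induction re-rooted in this way (or the statement suitably generalised), your bookkeeping for the length $|\IGRS|-|\Jind_0|+|\Jind_1|$ and dimension $k-|\Jind_0|-\frac{|\Jind_2|}{2}$ is correct and matches the paper's.
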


\begin{proof}
Let us prove by induction on $\ell = |\Jind|$ that 
$\CC'$ is a subcode of a GRS code 
of length $|\IGRS|-|\Jind_0|+|\Jind_1|$ and dimension
$k-|\Jind_0|-\frac{|\Jind_2|}{2}\cdot$

This statement is clearly true if $\ell=0$, \textit{i.e.} if $\Jind$ is the empty set.
Assume that the result is true for all $\Jind$ up to some size $\ell \geq 0$.
Consider now a set $\Jind$ of size $\ell+1$. 
We can write $\Jind = \Jind' \cup\{i\}$ where $ \Jind'$ is of size $\ell$.

Let $\Jind_0, \Jind_1, \Jind_2$ be subsets of $\Jind$ as defined in the statement and
$\Jind'_0, \Jind'_1, \Jind'_2$ be the subsets of $\Jind'$ obtained by replacing in the statement $\Jind$ by $\Jind'$.
There are now several cases to consider for $i$. 

\begin{itemize}
\item[{\bf Case 1:}] $i \in \Jind_0$. In this case,
  $\Jind_0 = \Jind'_0 \cup \{i\}$, $\Jind_1 = \Jind'_1$ and $\Jind_2= \Jind'_2$.
  We can apply Lemma~\ref{lem:subcode} with $\AC=\sh{\CC}{\Jind'}$ because by the induction hypothesis, 
  its restriction to
  \[
  %$
  \Jind'' \eqdef (\IGRS\setminus\Jind'_0)\cup\tau(\Jind'_1)
  %$
  \] 
  is a subcode of a GRS code of length $|\IGRS|-|\Jind'_0|+|\Jind'_1|$
  and dimension $k-|\Jind'_0|-\frac{|\Jind'_2|}{2}\cdot$ Therefore the
  restriction of the shortened code
  $\sh{\CC}{\Jind} = \sh{\AC}{\{i\}}$ to
  $\Jind''\setminus\{i\}=(\IGRS\setminus\Jind_0)\cup\tau(\Jind_1)$ is
  a subcode of a GRS code of length $|\IGRS|-|\Jind_0|+|\Jind_1|$ and
  dimension
  $k-|\Jind'_0|-\frac{|\Jind'_2|}{2}-1
  =k-|\Jind_0|-\frac{|\Jind_2|}{2}\cdot$

\item[{\bf Case 2:}] $i \in \Jind_1$. In this case,
  $\Jind_0 = \Jind'_0, \Jind_1 = \Jind'_1 \cup \{i\}$ and
  $\Jind_2= \Jind'_2$.  This implies that $\Jind'$ does not contain
  $i$ nor $\tau(i)$.  We can therefore apply
  Lemma~\ref{lem:derandomize} with $\CC' = \sh{\CC}{\Jind'}$.
  Lemma~\ref{lem:derandomize} states that the position $\tau(i)$
  behaves like a GRS position in $\sh{\CC'}{\{i\}} = \sh{\CC}{\Jind}$.
  By induction hypothesis, the restriction of the code $\CC'$ to
  $(\IGRS\setminus\Jind'_0)\cup\tau(\Jind'_1)$ is a subcode of a GRS
  code of length $|\IGRS|-|\Jind'_0|+|\Jind'_1|$ and dimension
  $k-|\Jind'_0|-\frac{|\Jind'_2|}{2}
  =k-|\Jind_0|-\frac{|\Jind_2|}{2}\cdot$
  Therefore the restriction of $\sh{\CC'}{\{i\}}= \sh{\CC}{\Jind}$ to
  $(\IGRS\setminus\Jind_0)\cup\tau(\Jind_1)=
  (\IGRS\setminus\Jind'_0)\cup\tau(\Jind'_1)\cup\{\tau(i)\}$
  is a subcode of a GRS code of dimension
  $k-|\Jind_0|-\frac{|\Jind_2|}{2}$ and length
  $|\IGRS|-|\Jind'_0|+|\Jind'_1|+1=|\IGRS|-|\Jind_0|+|\Jind_1|$.

\item[{\bf Case 3:}] $i \in \Jind_2$. In this case,
  $\Jind_0 = \Jind'_0, \Jind_1 = \Jind'_1 \setminus \{\tau(i)\}$ and $\Jind_2= \Jind'_2 \cup \{i, \tau(i)\}$.
  In fact, this case can only happen if $\ell \geq 1$ and we will rather consider the induction with respect to the set $\Jind'' = \Jind \setminus \{i, \tau(i)\}$ of size $\ell-1$ and the sets $\Jind''_0, \Jind''_1, \Jind''_2$ such that $\Jind''_0 = \Jind_0, \Jind''_1 = \Jind_1, \Jind''_2 = \Jind_2 \setminus\{i,\tau(i)\}$.
  
By induction hypothesis on $\Jind''$, the restriction of $\CC'' \eqdef \sh{\CC}{\Jind''}$ to 
$(\IGRS\setminus\Jind''_0)\cup\tau(\Jind''_1)$ is a subcode of a GRS code of length 
$|\IGRS|-|\Jind''_0|+|\Jind''_1|=|\IGRS|-|\Jind_0|+|\Jind_1|$ and dimension
$k-|\Jind''_0|-\frac{|\Jind''_2|}{2} =k-|\Jind_0|-\frac{|\Jind_2|}{2}+1$.

Following Assumption~\ref{ass:identity}, we can write without loss of generality that 
$i = n-w+2s-1$ for some $s \in \{1,\dots,w\}$.
The case $i=n-w+2s$ can be proved in a similar way.
Denote $\Am_s = \begin{pmatrix} a & b \\ c & d \end{pmatrix}$ the non-singular matrix and $j = n-w+s$.

For any $\cv \in \CC'$, at positions $i$ and $\tau(i)$ we have 
\begin{align*}
 c_i       & = a  y_{j}  f(x_{j}) + c  \psi_s(f), \\
 c_{\tau(i)} & = b  y_{j}  f(x_{j}) + d  \psi_s(f).  
\end{align*}

Shortening $\CC''$ at $\{i,\tau(i)\}$ has the effect
of requiring to consider only the polynomials $f$ for which
$f(x_j) = \psi_s(f) = 0$. Therefore the restriction of
$\sh{\CC''}{\{i,\tau(i)\}}=\sh{\CC}{\Jind}$ at 
$(\IGRS\setminus\Jind''_0)\cup\tau(\Jind''_1)$ is a subcode of a GRS code of length 
$|\IGRS|-|\Jind_0|+|\Jind_1|$ and dimension 
$k-|\Jind_0|-\frac{|\Jind_2|}{2}+1-1=k-|\Jind_0|-\frac{|\Jind_2|}{2}\cdot$

\item[{\bf Case 4:}] $i \in \IR$.  In this case $\Jind_0 = \Jind'_0,
\Jind_1 = \Jind'_1$ and $\Jind_2= \Jind'_2$. Using the induction
hypothesis yields directly that $\AC=\sh{\CC}{\Jind'}$ is a subcode of
a GRS code of length
$|\IGRS|-|\Jind'_0|+|\Jind'_1|=|\IGRS|-|\Jind_0|+|\Jind_1|$ and
dimension
$k-|\Jind'_0|-\frac{|\Jind'_2|}{2}=k-|\Jind_0|-\frac{|\Jind_2|}{2}\cdot$ This
is also clearly the case for $\sh{\CC}{\Jind}=\sh{\AC}{\{i\}}$.
\end{itemize} This proves that the induction hypothesis also holds for
$|\Jind|=\ell+1$ and finishes the proof of the proposition.
\end{proof}

\subsubsection{A general result on modified GRS codes}

Finally, we need a very general result concerning modified GRS codes
where some arbitrary columns have been joined to the generator matrix.
A very similar lemma is already proved in \cite[Lemma 9]{CGGOT14}.
Its proof is repeated below for convenience and in order to provide further
details about the equality case.

\begin{lemma}\label{lem:square}
  Consider a linear code $\code{A}$ over $\Fq$ with generator matrix
  $\Gm=\begin{pmatrix} \GSCGRS& \GRand
  \end{pmatrix} \Pm$
  of size $k \times (n+r)$ where $\GSCGRS$ is a $k \times n$ generator
  matrix of a subcode of a GRS code of dimension $\kGRS$ over $\Fq$,
  $\GRand$ is an arbitrary matrix in $\Fq^{k \times r}$ and $\Pm$ is
  the permutation matrix of an arbitrary permutation
  $\sigma \in \mathfrak{S}_{n+r}$.  We have
\[
\dim \sq{\code{A}} \leq 2\kGRS-1+r.
\]
Moreover, if the equality holds, then for every $i \in \IInt{n+1}{n+w}$ we have:
\[\dim \pu{\sq{\AC}}{\{\sigma(i)\}} = \dim \sq{\AC} - 1.\]
\end{lemma}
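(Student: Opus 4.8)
The plan is to first reduce to the case $\Pm = \I$: permuting the $n+r$ coordinates only relabels positions and changes neither $\dim\sq{\AC}$ nor the dimension of any puncturing (exactly as in Lemma~\ref{lem:square_permut}), so it suffices to prove both conclusions for the unpermuted code, whose generator matrix is $\begin{pmatrix}\GSCGRS & \GRand\end{pmatrix}$, with the puncturing positions taken to be $\IInt{n+1}{n+r}$ (those carrying $\GRand$); the permuted statement then follows by relabelling. Set $J \eqdef \IInt{1}{n}$, the block carrying $\GSCGRS$. The backbone of the argument is the elementary identity
\[
\dim \CC = \dim \re{\CC}{J} + \dim \sh{\CC}{J},
\]
valid for every linear code $\CC \subseteq \Fq^{n+r}$: it is just rank--nullity for the restriction map $\CC \to \re{\CC}{J}$, whose kernel consists of the codewords of $\CC$ supported on $\IInt{n+1}{n+r}$ and is carried isomorphically onto $\sh{\CC}{J}$ by restriction to $\IInt{n+1}{n+r}$. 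I would apply it with $\CC = \sq{\AC}$.

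Next I would bound the two summands. First, $\re{\sq{\AC}}{J}$ is spanned by the restrictions to $J$ of the generators $\cv \star \dv$ of $\sq{\AC}$ (with $\cv,\dv \in \AC$); each such restriction is the Schur product of the restrictions of $\cv$ and $\dv$ to $J$, which lie in the row space of $\GSCGRS$, itself contained by hypothesis in $\GRS{\kGRS}{\xv}{\yv}$ for some support $\xv$ and multiplier $\yv$. Hence $\re{\sq{\AC}}{J} \subseteq \sqb{\GRS{\kGRS}{\xv}{\yv}} = \GRS{2\kGRS-1}{\xv}{\yv\star\yv}$, so $\dim\re{\sq{\AC}}{J} \le 2\kGRS-1$. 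Second, $\sh{\sq{\AC}}{J}$ has length $r$, hence $\dim\sh{\sq{\AC}}{J} \le r$. Substituting both bounds into the identity gives $\dim\sq{\AC} \le 2\kGRS-1+r$.

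For the equality case, suppose $\dim\sq{\AC} = 2\kGRS-1+r$. Then both bounds must be tight; in particular $\dim\sh{\sq{\AC}}{J} = r$, i.e.\ $\sh{\sq{\AC}}{J} = \Fq^{r}$. Unwinding the definition of shortening, this says that for every $\uv \in \Fq^{r}$ the word $(\mathbf{0},\uv) \in \Fq^{n}\times\Fq^{r}$ belongs to $\sq{\AC}$; in particular, for each $i \in \IInt{n+1}{n+r}$ the unit word $\word{e}_i$ (a single nonzero entry, equal to $1$, in position $i$) lies in $\sq{\AC}$. Then puncturing $\sq{\AC}$ at $\{i\}$ has kernel $\{\cv\in\sq{\AC} : \suppt{\cv}\subseteq\{i\}\} = \Fq\,\word{e}_i$, which is one-dimensional, so $\dim\pu{\sq{\AC}}{\{i\}} = \dim\sq{\AC}-1$; by the relabelling of the first paragraph this gives the stated equality at every position $\sigma(i)$.

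The only step that really needs care is the bookkeeping in the equality case: one must note that saturating $\dim\sq{\AC} \le 2\kGRS-1+r$ forces \emph{the second} summand $\dim\sh{\sq{\AC}}{J}$ to reach its maximum $r$, and then read off from $\sh{\sq{\AC}}{J} = \Fq^{r}$ that the whole $r$-dimensional ``random block'' $\{\mathbf{0}\}\times\Fq^{r}$ sits inside $\sq{\AC}$ --- after which the puncturing claim is immediate. The other ingredients --- the permutation reduction, the split identity, and the identity $\sqb{\GRS{\kGRS}{\xv}{\yv}} = \GRS{2\kGRS-1}{\xv}{\yv\star\yv}$ --- are routine.
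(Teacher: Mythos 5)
Your proof is correct, and it takes a slightly different route from the paper's. The paper bounds $\dim \sq{\AC}$ by embedding $\AC$ in the sum $\BC + \BC'$ of two support-disjoint codes (the GRS block padded with zeros and the random block padded with zeros), expanding $\sqb{\BC+\BC'} \subseteq \sq{\BC} + \sq{\BC'} + \BC\star\BC'$ and killing the cross term because the supports are disjoint; equality then forces $\sq{\AC} = \sq{\BC} \oplus \sq{\BC'}$ with $\dim\sq{\BC'} = r$, whence the weight-one words on the last $r$ positions. You instead apply the exact rank--nullity identity $\dim\sq{\AC} = \dim\re{\sq{\AC}}{J} + \dim\sh{\sq{\AC}}{J}$ with $J$ the GRS block, bound the restriction by $2\kGRS-1$ (using that restriction commutes with Schur products, so $\re{\sq{\AC}}{J}$ sits inside the square of the ambient GRS code) and the shortening trivially by its length $r$. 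The two arguments exploit the same block structure and the same GRS-square ingredient, but yours replaces the paper's containment in a direct sum by an exact additive decomposition of $\dim\sq{\AC}$; this makes the equality case slightly more transparent, since saturation immediately yields $\sh{\sq{\AC}}{J} = \Fq^r$ and hence $\word{e}_i \in \sq{\AC}$ for every position $i$ of the random block, after which the drop-by-exactly-one conclusion follows from the one-dimensionality of the puncturing kernel, just as in the paper. The permutation reduction and the handling of the equality case are both handled correctly (note in passing that the $\IInt{n+1}{n+w}$ in the statement should read $\IInt{n+1}{n+r}$, which is the convention you in fact use).
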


\begin{proof}
  Without loss of generality, we may assume that $\Pm$ is the identity
  matrix since the dimension of the square code is invariant by
  permuting the code positions, as seen in Lemma
  \ref{lem:square_permut}.  Let $\BC$ be the code with generator
  matrix $\begin{pmatrix} \GSCGRS& \zerom{k \times r} \end{pmatrix}$
  where $\zerom{k \times r}$ is the zero matrix of size $k \times r$.
  We also define the code $\BC'$ generated by the generator matrix
  $\begin{pmatrix} \zerom{k \times n} & \GRand \end{pmatrix}$.  We
  obviously have
  \begin{equation*}
    \AC \subseteq \BC + \BC'.
  \end{equation*}
  Therefore 
  \begin{eqnarray*}
    \sqb{\AC} & \subseteq & \sqc{\left(\BC + \BC'\right)}\\
              & \subseteq & \sqc{\BC} + \sqb{\BC'} +  \BC\star \BC' \\
              & \subseteq & \sqc{\BC} + \sqb{\BC'},
  \end{eqnarray*}
  where the last inclusion comes from the fact that $ \BC\star \BC'$
  is the zero subspace since $\BC$ and $\BC'$ have disjoint supports.
  The code $\sqc{\BC}$ has dimension $\leq 2\kGRS-1$ whereas
  $\dim \sqb{\BC'} \leq r$.

  Next, if $\dim \sq{\AC} = 2 \kGRS - 1 + r$, then
  \[
  \sq{\AC} = \sq{\BC} \oplus \sq{(\BC')} \quad {\rm and} \quad
  \dim \sq{(\BC')} = r.
  \]
  Since $\BC'$ has length $r$, this means that $\sq{(\BC')} = \F_q^r$
  and hence, any word of weight $1$ supported by the $r$ rightmost
  positions is contained in $\sq{\AC}$.  Therefore, puncturing this
  position will decrease the dimension.
\end{proof}

\subsection{Proof of the theorem}

We are now ready to prove Theorem \ref{thm:main}.

\begin{proof}[Proof of Theorem~\ref{thm:main}]
  By using Proposition~\ref{prop:structure}, we know that the
  restriction of $\sh{\CC}{\Jind}$ to
  $(\IGRS\setminus\Jind_0)\cup\tau(\Jind_1)$ is a subcode of a GRS
  code of length
  $|\IGRS|-|\Jind_0|+|\Jind_1|=n-w+|\IGRS^2|-|\Jind_0|+|\Jind_1|$ and
  dimension $\kGRS \eqdef k-|\Jind_0|-\frac{|\Jind_2|}{2}$, where:
  \begin{itemize}
  \item
    $\Jind_0\eqdef \IGRS \cap \Jind$;
  \item $\Jind_1$ is
    the set of PR positions of $\Jind$ that do not have their twin in $\Jind$;
  \item $\Jind_2$ is the union of all twin PR positions that are both included in $\Jind$.
  \end{itemize}
  We also denote by $\Jind_3$ the set $\IR \cap \Jind$.
  We can then apply Lemma \ref{lem:square} to $\sh{\CC}{\Jind}$ and derive from it the following 
  upper bound:
  \begin{eqnarray*}
    \dim \sqb{\sh{\CC}{\Jind}} &\leq &2 \kGRS -1 + |\IPR \setminus (\Jind\cup \tau(\Jind_1))| + |\IR \setminus \Jind_3|.
  \end{eqnarray*}
  Next, using Lemma~\ref{lem:cardIPR}, we get
  \begin{eqnarray}
    \dim \sqb{\sh{\CC}{\Jind}}
                               & \leq & 2 \left(k-|\Jind_0|-\frac{|\Jind_2|}{2}\right)-1 + 2\left(w - |\IR| \right) - 2|\Jind_1| -|\Jind_2| + |\IR|-|\Jind_3|\nonumber \\
                               & \leq & 2\left(k +w -|\Jind_0|-|\Jind_1|-|\Jind_2|-|\Jind_3|\right)-1+\left(|\Jind_3|-|\IR|\right)\label{eq:crucial1}\\
                               & \leq & 2\left(k +w -|\Jind|\right)-1.\label{eq:crucial2}
  \end{eqnarray}
  The other upper bound on $\dim \sqb{\sh{\CC}{\Jind}}$ which is
  $\dim \sqb{\sh{\CC}{\Jind}} \leq n+w - |\Jind|$ follows from the
  fact that the dimension of this code is bounded by its length. Putting
  both bounds together yields the theorem.
\end{proof}

\begin{rem}\label{rem:almost_always}
  According to our simulations, the inequality of
  Theorem~\ref{thm:main} is sharp and is attained most of the time
  when $\IR$ is the empty set. When $\IR$ is not the empty set, then
  we may have equality only if we shorten all the positions in $\IR$:
  this is because the right-hand term in \eqref{eq:crucial1} should
  coincide with the right-hand term in \eqref{eq:crucial2}, which is
  equivalent to $|\Jind_3|=|\IR|$.
\end{rem}

\section{Reaching the range of the distinguisher}

For this distinguisher to work we need to shorten the code enough so
that its square does not fill in the ambient space, but not too much
since the square of the shortened code should have a dimension
strictly less than the typical dimension of the square of a random
code given by Proposition~\ref{prop:CCMZ}. Namely, we need to
have:

\begin{equation}\label{eq:dist_interval}
\dim \sqb{\sh{\CC}{\Jind}} < {k + 1 - |\Jind| \choose 2}
\qquad {\rm and}
\qquad
\dim \sqb{\sh{\CC}{\Jind}} < n+w-|\Jind|.
\end{equation}

Thanks to Theorem~\ref{thm:main}, we know that
\eqref{eq:dist_interval} is satisfied as soon as
\begin{equation}\label{eq:plus_fort}
2(k + w - |\Jind|)-1 < {k + 1 - |\Jind| \choose 2}
\qquad {\rm and}
\qquad
2(k + w - |\Jind|)-1 < n+w-|\Jind|.
\end{equation}

We will now find the values of $|\Jind|$ for which both inequalities of
\eqref{eq:plus_fort} are satisfied.

\paragraph{First inequality.} 
To study when the first inequality in \eqref{eq:plus_fort} is verified, 
let us bring in
\[
k' \eqdef k - |\Jind|.
\]

Inequality~(\ref{eq:plus_fort}) becomes
$4k'-2+4w < {k'}^2+k'$, or equivalently
$k'^2-3k'-4w+2 >0$,
which after a resolution leads to
$k' > \frac{3+ \sqrt{16w+1}}{2}\cdot$

Hence, we have: 
\begin{equation}
  \label{eq:upperInd}
  |\Jind| < k - \frac{3+ \sqrt{16w+1}}{2}\cdot
\end{equation}

\paragraph{Second inequality.}
On the other hand, the second inequality in \eqref{eq:plus_fort} 
is equivalent to
\begin{equation}\label{eq:lowerInd}
|\Jind| \geq w + 2k-n.
\end{equation}

\paragraph{Conditions to verify both inequalities.}
Putting inequalities (\ref{eq:upperInd}) and (\ref{eq:lowerInd}) together
gives that $|\Jind|$ should satisfy
\[
w + 2k-n \leq |\Jind| < k - \frac{3+ \sqrt{16w+1}}{2}\cdot
\]
We can therefore find an appropriate $\Jind$ if and only if 
\[
w + 2k-n  < k - \frac{3+ \sqrt{16w+1}}{2},
\]
which is equivalent to
\[
n-k > w+ \frac{3+ \sqrt{16w+1}}{2} = w + O(\sqrt{w}).
\]
In other words, the distinguisher works up to values of $w$ that are close to the second choice 
$n-k=w$.

From now on we set
\begin{eqnarray*}
\ell_{\rm min} &= & w + 2k-n\\
\ell_{\rm max} &= & \left\lceil k - \frac{3+ \sqrt{16w+1}}{2} -1 \right\rceil \cdot
\end{eqnarray*}

\paragraph{Practical results.}\label{ss:practical}
We have run experiments using {\sc Magma} \cite{BCP97} and {\sc Sage}. For
the parameters of Table~\ref{tab:1stGroup}, here are the intervals
of possible values of $|\Jind|$ so that the code $\sh{\CC}{\Jind}^{\star 2}$
has a non generic dimension:
\begin{itemize}
\item ID 1: $n =  532, k = 376, w =  96$,  $|\Jind| \in \IInt{316}{354}$;
\item ID 3: $n =  846, k = 618, w = 144$,  $|\Jind| \in \IInt{534}{592}$;
\item ID 5: $n = 1160, k = 700, w = 311$,  $|\Jind| \in \IInt{551}{663}$.
\end{itemize}
There interval always coincide with the theoretical interval $\IInt{\ell_{\rm min}}{\ell_{\rm max}}$.

\section{The attack}

In this section we will show how to find an equivalent private key
$(\xv, \yv, \Am, \Pm)$ defining the same code. This allows to decode 
and recover the original message like a legitimate user.

We assume that all the matrices $\Am_s = \begin{pmatrix}
  a_s & b_s\\
  c_s & d_s \end{pmatrix}$
appearing in the definition of the scheme in Subsection
\ref{subsec:presentation} are such that $c_s d_s \neq 0$.  We explain
in the appendix how the attack can be changed to take the case
$c_s d_s=0$ into account. Note that this corresponds to a case where
$\IR = \emptyset$ and $\IGRS^2 = \emptyset$, which is the typical case
as noticed in Remark~\ref{rem:typical_case}.

\begin{rem}
  In the present section where we the goal is to recover the permutation,
  we no longer work under Assumption~\ref{ass:identity}.
\end{rem}

\subsection{Outline of the attack}
In summary, the attack works as follows.
\begin{enumerate}
  \item Compute the interval $\IInt{\ell_{\min}}{\ell_{\max}}$
    of the distinguisher and 
    choose $\ell$ in the middle of the distinguisher interval.
    Ensure $\ell < \ell_{\max}$.
  \item For several sets of indices $\Jind \subseteq \Pos$ such that
    $|\Jind| = \ell$, compute $\sh{\CC}{\Jind}$ and identify pairs of
    twin positions contained in $\Pos$. Repeat this process until
    identifying all pairs of twin positions, as detailed in
    \S~\ref{subsec:identify}.
  \item Puncture the twin positions in order to get a GRS code and
    recover its structure using the Sidelnikov Shestakov attack
    \cite{SS92}.
  \item For each pair of twin positions, recover the corresponding
    $2 \times 2$ non-singular matrix $A_i$, as explained in
    \S~\ref{subsec:recover-A}.
  \item Finish to recover the structure of the underlying GRS code.
\end{enumerate}

\subsection{Identifying pairs of twin positions}
\label{subsec:identify}

Let $\Jind \subseteq \Pos$ be such that both $|\Jind|$ and $|\Jind|+1$
are contained in the distinguisher interval.  The idea we use to
identify pairs of twin positions is to compare the dimension of
$\left(\sh{\CC}{\Jind}\right)^{\star 2}$ with the dimension of
$\left(\pu{\sh{\CC}{\Jind}}{\{i\}}\right)^{\star 2}$ for all positions
$i$ in $\Pos \setminus \Jind$.  This yields information on pairs of
twin positions.

\begin{itemize}
\item If $i \in \IGRS$ (see (\ref{eq:I1GRS}), (\ref{eq:I2GRS})
  and (\ref{eq:IGRS}) for the definition), puncturing does not affect the dimension of
  the square code:
  \[
  \dim \left(\sh{\CC}{\Jind}\right)^{\star 2} = \dim
  \left(\pu{\sh{\CC}{\Jind}}{\{i\}}\right)^{\star 2}.
  \]
\item If $i \in \IPR$ (see (\ref{eq:IPR}) for a definition) and
  $\tau(i) \in \Jind$, then according to Lemma~\ref{lem:derandomize},
  the position $i$ is ``derandomised'' in $\sh{\CC}{\Jind}$ and hence
  behaves like a GRS position in the shortened code. Therefore, very
  similarly to the previous case, the dimension does not change.
\item If $i \in \IPR$ and $\tau(i) \not\in \Jind$, in $\sh{\CC}{\Jind}$, the 
  two corresponding columns behave like random ones. 
  Assuming that the inequality of Theorem~\ref{thm:main} is
  an equality, which almost always holds (see
  Remark~\ref{rem:almost_always}),
  then, according to Lemma \ref{lem:square}, puncturing
  $\sh{\CC}{\Jind}^{\star 2}$ at $i$ (resp. $\tau(i)$) reduces its dimension.
  Therefore, 
\[
\dim \left(\pu{\sh{\CC}{\Jind}}{\{i\}}\right)^{\star 2} = \dim
\left(\pu{\sh{\CC}{\Jind}}{\{\tau(i)\}}\right)^{\star 2} = \dim
\left(\sh{\CC}{\Jind}\right)^{\star 2} - 1.
\]
\end{itemize}

This provides a way to identify any position in $\Pos \setminus \Jind$
having a twin which also lies in $\Pos \setminus \Jind$: by searching
zero columns in a parity--check matrix of $\sq{\sh{\CC}{\L}}$, we
obtain the set $\T_{\Jind} \subset \Pos \setminus \Jind$ of even
cardinality of all the positions having their twin in
$\Pos \setminus \Jind$:
\[
\T_{\Jind} \eqdef \bigcup_{\{i,\tau(i)\} \subseteq \Pos \setminus \Jind}\{ i,\tau(i)\}.
\]

As soon as these positions are identified, we can associate each such
position to its twin. This can be done as follows. Take
$i \in \T_{\Jind}$ and consider the code $\sh{\CC}{\Jind \cup \{i\}}$.
The column corresponding to the twin position $\tau(i)$ has been
derandomised and hence will not give a zero column in a parity--check
matrix of $\left(\sh{\CC}{\Jind \cup \{i\}}\right)^{\star 2}$, so
puncturing the corresponding column will not affect the
dimension. 

This process can be iterated by using various shortening
sets $\Jind$ until obtaining $w$ pairs of twin positions. It is
readily seen that considering $O(1)$ such sets is enough to recover
all pairs with very large probability.

\subsection{Recovering the non-randomised part of the code}
\label{subsec:recover-nonrandon}

As soon as all the pairs of twin positions are identified, consider
the code $\pu{\CC}{\IPR}$ punctured at $\IPR$.  Since the randomised
positions have been punctured this code is nothing but a GRS code and,
applying the Sidelnikov Shestakov attack \cite{SS92}, we recover a
pair $\av, \bv$ such that
\[
\pu{\CC}{\IPR} = \GRS{k}{\av}{\bv}.
\]

\subsection{Recovering the remainder of the code and the matrix
$\Am$}
\label{subsec:recover-A}

\subsubsection{Joining a pair of twin positions : the code $\CCi$}

To recover the remaining part of the code we will consider iteratively the
pairs of twin positions. We recall that $\IPR$ corresponds to the set of 
positions having a twin. Let $\{i, \tau(i)\}$ be a pair of twin positions
and consider the code
\[
\CCi \eqdef \re{\CC}{\IGRS \cup \{ i, \tau(i)\}}.
\]

In this code, all but two columns, columns $i$ and $\tau(i)$, are GRS positions. 

For any codeword $\cv \in \CCi$ we have
\begin{equation}\label{eq:cvs}
  \begin{array}{rcl}
  c_{i} &=& a   y_j  f(x_j) + c  \psi_j (f) \\
  c_{\tau(i)} &=& b   y_j  f(x_j) + d  \psi_j(f)
  \end{array}
\end{equation}
for some $j \in \IInt{n-w+1}{n}$,
where $\psi_j$ and
$
\Am = 
  \begin{pmatrix}
    a & b \\ c & d 
  \end{pmatrix}
$
are defined as in \eqref{eq:PR} and \eqref{eq:Ams}.

Note that we do not need to recover exactly $(\xv, \yv, \Am, \Pm)$.
We need to recover a $4$--tuple $(\xv', \yv', \Am', \Pm')$ which
describes the same code. Thus, without loss of generality, after possibly
replacing $a$ by $ay_j$ and $b$ by $by_j$, one can suppose that $y_j = 1$.
Moreover, after possibly replacing $\psi_j$ by $d\psi_j$, one can suppose
that $d = 1$. Recall that in this section we suppose that $cd \neq 0$.

Thanks to these simplifying choices, (\ref{eq:cvs}) becomes
\begin{eqnarray*}
  c_{i} &=& a  f(x_j) + c  \psi_j (f)\\
  c_{\tau(i)} &=& b  f(x_j) + \psi_j(f).
\end{eqnarray*}

\subsubsection{Shortening $\CCi$ at the last position to recover $x_j$}

If we shorten $\CCi$ at the $\tau(i)$-th position,
according to Lemma~\ref{lem:derandomize}, it will
``derandomise'' the $i$-th position (it implies $\psi_j(f)=-bf(x_j)$)
and any $\cv \in \sh{\CCi}{\{\tau(i)\}}$ verifies
\[
c_i = (a - bc)  f(x_j).
\]

Since the support $x_j$ and multiplier $y_j$ are known at all the
positions of $\CCi$ but the two PR ones, for any codeword
$\cv \in \sh{\CCi}{\{\tau(i)\}}$, one can find the polynomial
$f \in \F_q[x]_{<k}$ whose evaluation provides $\cv$. Therefore,
by collecting a basis of codewords in $\sh{\CCi}{\{\tau(i)\}}$ and the
corresponding polynomials, we can recover the values of $x_j$ and $a - bc$.

\subsubsection{Recovering the $2 \times 2$ matrix}

Once we have $x_j$ we need to recover the matrix
\[
  \Am =
\begin{pmatrix}
  a & b \\  c & 1
\end{pmatrix}.
\]
Note that, its determinant $\det \Am = a - bc$ has already been obtained in
the previous section.
First, one can guess $b$ as follows. Let $\Gmi$ be a generator matrix
of $\CCi$. As in the previous section, by interpolation, one can compute
the polynomials $f_1, \ldots, f_k$ whose evaluations provide the rows of
$\Gmi$. Consider the column vector
\[
\vv \eqdef
\begin{pmatrix}
  f_1(x_j) \\ \vdots \\ f_k(x_j)
\end{pmatrix}
\]
and denote by $\vv_i$ and $\vv_{\tau(i)}$ the columns of $\Gmi$ corresponding to positions $c_i$ and $c_{\tau(i)}$:
\[
\vv_i =
\begin{pmatrix}
  af_1(x_j)+c\psi_j(f_1) \\ \vdots \\ af_k(x_j)+c\psi_j(f_k)
\end{pmatrix}
\qquad {\rm and}\qquad
\vv_{\tau(i)} =
\begin{pmatrix}
  bf_1(x_j)+\psi_j(f_1) \\ \vdots \\ bf_k(x_j)+\psi_j(f_k)
\end{pmatrix}.
\]

Next, search $\lambda \in \F_q$ such that $\vv_i - \lambda \vv_{\tau(i)}$
is collinear to $\vv$. This relation of collinearity can be expressed
in terms of cancellation of some $2 \times 2$ determinants which are
polynomials of degree $1$ in $\lambda$. Their common root is nothing
but $c$.

Finally, we can find the pair $(a,b)$ by searching the pairs $(\lambda, \mu)$
such that
\begin{enumerate}[(i)]
\item $\lambda - c \mu = \det \Am$;
\item $\vv_i - \lambda \vv$ and $\vv_{\tau(i)} - \mu \vv$ are collinear.
\end{enumerate}
Here the relation of collinearity will be expressed as the cancellation of 
$2 \times 2$ determinants which are linear combinations of $\lambda, \mu$
and $\lambda \mu$ and elementary elimination process provides us with the value
of the pair $(a,b)$.

\section{Complexity of the attack}

The most expensive part of the attack is the step consisting in
identifying pairs of twin positions. Recall that, from \cite{CGGOT14},
the computation of the square of a code of length $n$ and dimension
$k$ costs $O(k^2n^2)$ operations in $\F_q$. We need to compute the 
square of a code $O(w)$ times, because there are $w$ pairs of twin positions.
Hence this step has a total complexity of $O(wn^2 k^2)$ operations in $\F_q$.
Note that the actual dimension of the shortened codes is significantly
less than $k$ and hence the previous estimate is overestimated.

The cost of the Sidelnikov Shestakov attack is that of a Gaussian
elimination, namely $O(nk^2)$ operations in $\F_q$ which is negligible
compared to the previous step.

The cost of the final part is also negligible compared to the computation of
the squares of shortened codes. This provides an overall complexity in
$O(wn^2k^2)$ operations in $\F_q$.

\section*{Conclusion}
We presented a polynomial time key-recovery attack based on a square
code distinguisher against the public key encryption scheme RLCE. This
attack allows us to break all the so-called {\em odd ID} parameters
suggested in \cite{W17}.  Namely, the attack breaks the parameter sets for
which the number $w$ of random columns was strictly less than $n-k$. Our
analysis suggests that, for this kind of distinguisher by squaring
shortenings of the code, the case $w = n-k$ is the critical one. The
{\em even ID} parameters of \cite{W17}, for which the relation $w = n-k$
always holds, remain out of the reach of our attack.

\section*{Acknowledgements}
The authors are supported by French {\em Agence nationale de la
  recherche} grants ANR-15-CE39-0013-01 {\em Manta} and
ANR-17-CE39-0007 {\em CBCrypt}.  Computer aided calculations have been
performed using softwares {\sc Sage} and {\sc Magma} \cite{BCP97}.

\newpage
\bibliographystyle{abbrv}
\bibliography{codecrypto}

\begin{thebibliography}{10}

\bibitem{BBCRS16}
M.~Baldi, M.~Bianchi, F.~Chiaraluce, J.~Rosenthal, and D.~Schipani.
\newblock Enhanced public key security for the {McEliece} cryptosystem.
\newblock {\em J. Cryptology}, 29(1):1--27, 2016.

\bibitem{BCDOT16}
M.~Bardet, J.~Chaulet, V.~Dragoi, A.~Otmani, and J.-P. Tillich.
\newblock Cryptanalysis of the {McEliece} public key cryptosystem based on
  polar codes.
\newblock In {\em Post-Quantum Cryptography~2016}, LNCS, pages 118--143,
  Fukuoka, Japan, Feb. 2016.

\bibitem{BL02}
T.~P. Berger and P.~Loidreau.
\newblock Security of the {Niederreiter} form of the {GPT} public-key
  cryptosystem.
\newblock In {\em Proc. IEEE Int. Symposium Inf. Theory - ISIT~2002}, page 267.
  {IEEE}, June 2002.

\bibitem{BCLMNPPSSSW17}
D.~J. Bernstein, T.~Chou, T.~Lange, I.~von Maurich, R.~Niederhagen,
  E.~Persichetti, C.~Peters, P.~Schwabe, N.~Sendrier, J.~Szefer, and W.~Wen.
\newblock Classic {M}c{E}liece: conservative code-based cryptography.
\newblock
  \url{https://csrc.nist.gov/CSRC/media/Projects/Post-Quantum-Cryptography/documents/round-1/submissions/Classic_McEliece.zip},
  Nov. 2017.
\newblock First round submission to the NIST post-quantum cryptography call.

\bibitem{BCP97}
W.~Bosma, J.~Cannon, and C.~Playoust.
\newblock The {Magma} algebra system {I}: The user language.
\newblock {\em J. Symbolic Comput.}, 24(3/4):235--265, 1997.

\bibitem{CCMZ15}
I.~Cascudo, R.~Cramer, D.~Mirandola, and G.~Z{\'e}mor.
\newblock Squares of random linear codes.
\newblock {\em IEEE Trans. Inform. Theory}, 61(3):1159--1173, March 2015.

\bibitem{CB14}
I.~V. Chizhov and M.~A. Borodin.
\newblock Effective attack on the {McEliece} cryptosystem based on
  {Reed-Muller} codes.
\newblock {\em Discrete Math. Appl.}, 24(5):273--280, 2014.

\bibitem{CGGOT14}
A.~Couvreur, P.~Gaborit, V.~Gauthier{-}Uma{\~{n}}a, A.~Otmani, and J.-P.
  Tillich.
\newblock Distinguisher-based attacks on public-key cryptosystems using
  {Reed-Solomon} codes.
\newblock {\em Des. Codes Cryptogr.}, 73(2):641--666, 2014.

\bibitem{CMP17}
A.~Couvreur, I.~M{\'a}rquez-Corbella, and R.~Pellikaan.
\newblock Cryptanalysis of {M}c{E}liece cryptosystem based on algebraic
  geometry codes and their subcodes.
\newblock {\em IEEE Trans. Inform. Theory}, 63(8):5404--5418, Aug 2017.

\bibitem{COT17}
A.~Couvreur, A.~Otmani, and J.-P. Tillich.
\newblock Polynomial time attack on wild {M}c{E}liece over quadratic
  extensions.
\newblock {\em IEEE Trans. Inform. Theory}, 63(1):404--427, Jan 2017.

\bibitem{COTG15}
A.~Couvreur, A.~Otmani, J.-P. Tillich, and V.~Gauthier{-}Uma{\~{n}}a.
\newblock A polynomial-time attack on the {BBCRS} scheme.
\newblock In J.~Katz, editor, {\em Public-Key Cryptography - PKC~2015}, volume
  9020 of {\em LNCS}, pages 175--193. Springer, 2015.

\bibitem{FGOPT13}
J.-C. Faug{\`e}re, V.~Gauthier, A.~Otmani, L.~Perret, and J.-P. Tillich.
\newblock A distinguisher for high rate {McEliece} cryptosystems.
\newblock {\em IEEE Trans. Inform. Theory}, 59(10):6830--6844, Oct. 2013.

\bibitem{HP03}
W.~C. Huffman and V.~Pless.
\newblock {\em Fundamentals of error-correcting codes}.
\newblock Cambridge University Press, Cambridge, 2003.

\bibitem{M78}
R.~J. McEliece.
\newblock {\em A Public-Key System Based on Algebraic Coding Theory}, pages
  114--116.
\newblock Jet Propulsion Lab, 1978.
\newblock DSN Progress Report 44.

\bibitem{SS92}
V.~M. Sidelnikov and S.~Shestakov.
\newblock On the insecurity of cryptosystems based on generalized
  {Reed-Solomon} codes.
\newblock {\em Discrete Math. Appl.}, 1(4):439--444, 1992.

\bibitem{W16}
Y.~Wang.
\newblock Quantum resistant random linear code based public key encryption
  scheme {RLCE}.
\newblock In {\em Proc. IEEE Int. Symposium Inf. Theory - ISIT~2016}, pages
  2519--2523, Barcelona, Spain, July 2016. {IEEE}.

\bibitem{W17}
Y.~Wang.
\newblock {RLCE}--{KEM}.
\newblock \url{http://quantumca.org}, 2017.
\newblock First round submission to the NIST post-quantum cryptography call.

\bibitem{W06}
C.~Wieschebrink.
\newblock Two {NP}-complete problems in coding theory with an application in
  code based cryptography.
\newblock In {\em Proc. IEEE Int. Symposium Inf. Theory - ISIT}, pages
  1733--1737, 2006.

\bibitem{W10}
C.~Wieschebrink.
\newblock Cryptanalysis of the {Niederreiter} public key scheme based on {GRS}
  subcodes.
\newblock In {\em Post-Quantum Cryptography~2010}, volume 6061 of {\em LNCS},
  pages 61--72. Springer, 2010.

\end{thebibliography}
\newpage
\appendix
\section{How to treat the case of degenerate twin positions?}
  \label{s:appendix}

  Recall that a pair of twin positions $i, \tau(i)$ is such that any
  codeword $\cv \in \CC$ has $i$--th and $\tau(i)$--th entries of he form:
  \[
  \cv_i = a y_j f(x_j) + b \psi_j(f) \qquad
  \cv_{\tau(i)} = c y_j f(x_j) + d \psi_j(f).
  \]

  This pair is said to be {\em degenerated} if either $b$ or $d$ is
  zero. In such a situation, some of the steps of the attack cannot be
  applied. In what follows, we explain how this rather rare issue
  can be addressed.

  If either $b$ or $d$ is zero, then one of the positions is
  actually a pure GRS position while the other one is PR but 
  the process explained in the article does not manage to associate
  the two twin columns.

  Suppose w.l.o.g. that $b = 0$.  In the first part if the attack,
  when we collect pairs of twin positions, the position $\tau(i)$ will
  be identified as PR with no twin sister {\em a priori}. To find its
  twin sister, we can proceed as follows. For any GRS position $j$
  replace the $j$--th column $\vv_j$ of a generator matrix $\Gm$ of
  $\CC$ by an arbitrary linear combination of $\vv_j$ and the
  $\tau(i)$--th column, this will ``pseudo--randomise'' this column
  and if the $j$--th column is the twin of the $\tau(i)$--th one, this
  will be detected by the process of shortening, squaring and
  searching zero columns in the parity check matrix.

\end{document}